\newcommand\bigforall{\mbox{\Large $\mathsurround0pt\forall$}}
\newtheorem{theorem}{Theorem}
\newtheorem{definition}{Definition}
\newtheorem{lemma}{Lemma}
\newtheorem{claim}{Claim}
\newtheorem{corollary}{Corollary}
\begin{document}
 
\title{Dual-Based Approximation Algorithms for Cut-Based Network Connectivity Problems}
\author{Benjamin Grimmer\\ {\small bdg79@cornell.edu}}
\date{}
\maketitle


\begin{abstract}
We consider a variety of NP-Complete network connectivity problems. We introduce a novel dual-based approach to approximating network design problems with cut-based linear programming relaxations. This approach gives a $3/2$-approximation to {\sc Minimum 2-Edge-Connected Spanning Subgraph} that is equivalent to a previously proposed algorithm.
One well-studied branch of network design models ad hoc networks where each node can either operate at high or low power. If we allow unidirectional links, we can formalize this into the problem {\sc Dual Power Assignment} ({\sc DPA}). Our dual-based approach gives a $3/2$-approximation to {\sc DPA}, improving the previous best approximation known of $11/7\approx 1.57$. 

Another standard network design problem is {\sc Minimum Strongly Connected Spanning Subgraph} ({\sc MSCS}). We propose a new problem generalizing {\sc MSCS} and {\sc DPA} called {\sc Star Strong Connectivity} ({\sc SSC}). Then we show that our dual-based approach achieves a 1.6-approximation ratio on {\sc SSC}. As a consequence of our dual-based approximations, we prove new upper bounds on the integrality gaps of these problems.
\end{abstract}

\section{Introduction}
\label{sec:Intro}

In this work, we present approximation algorithms for multiple network connectivity problems. All the problems we consider seek a minimum cost graph meeting certain connectivity requirements. Problems of this type have a wide array of applications. They have uses in the design and modeling of communication and ad hoc networks. Often these problems involve balancing fault tolerance and connectivity against cost of building and operating a network.

Many standard network connectivity problems have been shown to be NP-Complete~\cite{Garey1979}. As a result, there is little hope of producing fast (polynomial time) algorithms to solve these problems (unless $P=NP$). So the focus has shifted to giving fast (polynomial time) algorithms that approximately solve these problems. An approximation algorithm is said to have an approximation ratio of $\alpha$ if the cost of its output is always within a factor of $\alpha$ of the cost of the optimal solution.
Technically, $\alpha$ can depend on the size of the problem instance, but a constant approximation ratio is better than one that grows.

In Subsections~\ref{sec:Intro-ECS} and~\ref{Intro-PowerAssignment}, we introduce a number of standard graph connectivity problems and their notable approximation algorithms. There are many other connectivity problems and  algorithmic tools that we omit from our discussion (See~\cite{Gupta2011} for a more in-depth survey).
Then in Subsection~\ref{sec:OurResults}, we summarize our contributions. In Section~\ref{sec:DualBased}, we introduce our novel approach to approximating network design problems. The remaining sections will present our approximation algorithms based on this methodology.

\subsection{Edge-Connectivity Problems}
\label{sec:Intro-ECS}

One standard graph connectivity problems is {\sc Minimum 2-Edge-Connected Spanning Subgraph} ({\sc 2ECS}). This problem takes as input a 2-edge-connected graph and outputs a 2-edge-connected spanning subgraph with minimum cardinality edge set.
A $5/4 = 1.25$-approximation was proposed using a matching lower bound in~\cite{Jothi2003}. A $4/3\approx 1.33$-approximation was given by Vempala and Vetta~\cite{Vempala2000}. Another notable approximation algorithm appears in~\cite{Khuller1994}, which achieves a $1.5$ ratio using a graph carving method with linear runtime. This graph carving method is a special case of the dual-based approach we introduce in Section~\ref{sec:DualBased}.

One can generalize the problem of {\sc 2ECS} to have weights on every edge. Then the output is the spanning subgraph with minimum total weight on its edges. This problem seems to be harder to approximate that the unweighted version. The best approximation known for the weighted problem was given by Khuller and Vishkin and achieves a 2-approximation in~\cite{Khuller1994}. Later, Jain proposed a 2-approximation that applies to a more general class of Steiner problems~\cite{Jain2001}.

Another generalization of {\sc 2ECS} is to search for the $k$-edge-connected spanning subgraph with the least number of edges. Using a linear program rounding algorithm, this problem on a multigraph input has a $1+3/k$-approximation if $k$ is odd and a $1+2/k$-approximation if $k$ is even~\cite{Gabow2009}. Further, in~\cite{Gabow2009}, it is shown that for any fixed integer $k\geq 2$ that a $(1+\epsilon)$-approximation cannot exist for arbitrary $\epsilon>0$ (unless $P=NP$).


One of the most fundamental directed graph connectivity problems is {\sc Minimum Strongly Connected Spanning Subgraph} ({\sc MSCS}). This problem takes as input a strongly connected digraph and outputs a strongly connected spanning subgraph with minimum cardinality arc set. In~\cite{Vetta2001}, Vetta proposes the best approximation known for {\sc MSCS} using a matching lower bound to get an approximation ratio of $1.5$. There are two other notable approximation algorithms for {\sc MSCS}. First, Khuller, Raghavachari and Young gave a greedy algorithm with a $1.61+\epsilon$ approximation ratio~\cite{Khuller1995}, ~\cite{Khuller1996}. Second, Zhao, Nagamochi and Ibaraki give an algorithm that runs in linear time with a $5/3\approx 1.66$-approximation ratio in~\cite{Zhao2003}. This algorithm implicitly uses the dual of the corresponding cut-based linear program to bound the optimal solution.

When {\sc MSCS} is generalized to have weights on each arc, the best algorithm known is a 2-approximation. This straightforward algorithm works by computing an in-arborescence and an out-arborescence with the same root in the digraph, and outputting their union.
This is the best algorithm known even when arc weights are restricted to be in $\{0,1\}$.

\subsection{Power Assignment Problems}
\label{Intro-PowerAssignment}

A well-studied area of network design focuses on the problem of assigning power levels to vertices of a graph to achieve some connectivity property. This is useful in modeling radio networks and ad hoc wireless networks. It is common in this type of problem to minimize total power consumed by the system. 
This class of problems take as input a directed simple graph $G=(V,E)$ and a cost function $c:E \rightarrow \mathbb{R}^+$. A solution to this problem assigns every vertex a non-negative power, $p(v)$. We use $H(p)$ to denote the spanning subgraph of $G$ induced by the power assignment $p$ (we will formally define $H(p)$ later). The minimization problem then is to find the minimum power assignment, $\sum p(v)$, subject to $H(p)$ satisfying a specific property.

The first work on Power Assignment was done by Chen and Huang~\cite{Chen1989}, which assumed that $E$ is bidirected. We say an instance of Power Assignment is bidirected if whenever $uv\in E$ then $vu\in E$ and $c(uv)=c(vu)$. There has been a large amount of interest in this type of problem since 2000 (some of the earlier papers are~\cite{Hajiaghayi2003},~\cite{Ramanathan2000} and~\cite{Wattenhofer2001}).
While we consider problems seeking strong connectivity, other works have focused on designing fault-tolerant networks. In~\cite{Wang2008}, approximations for both problems seeking biconnectivity and edge-biconnectivity are given. Further,~\cite{Lam2015} considers the more general problems of $k$-connectivity and $k$-edge-connectivity.

We consider an asymmetric version of Power Assignment that allows unidirectional links defined as follows. The power assignment induces a spanning directed subgraph $H(p)$ where $xy \in E(H(p))$ if the arc $xy \in E$ and $p(x) \geq c(xy)$. The goal is to minimize the total power subject to $H(p)$ being strongly connected.
This problem was shown to be NP-Complete in~\cite{Carmi2007}.
Many different approximations for this problem have been proposed, which are compared in~\cite{Calinescu2012}. If we assume the input digraph and cost function are bidirected, the best approximation known achieves a $1.85$-approximation ratio~\cite{Calinescu2013}.

We are particularly interested in a special case of asymmetric Power Assignment called {\sc Dual Power Assignment} ({\sc DPA}). This problem takes a bidirected instance of Asymmetric Power Assignment with cost function $c:E\rightarrow\{0,1\}$. This models a network where each node can operate at either high or low power, and finds a minimum sized set of nodes to assign high power to produce a strongly connected network. The best approximation known for {\sc DPA} was proposed by~\cite{Abu2014} and achieves an $11/7\approx1.57$-approximation. This algorithm is based on interesting properties of Hamiltonian cycles.
A greedy approach to approximating {\sc DPA} was first given in~\cite{Chen2005} and achieved a $1.75$-approximation ratio.
Then in C{\u{a}}linescu in~\cite{Calinescu2014} showed that this greedy approach can be extended to match the $1.61+\epsilon$-approximation ratio of Khuller et al. for {\sc MSCS} in~\cite{Khuller1995},~\cite{Khuller1996}.
A greedy algorithm based on the same heuristic was later shown to give a $5/3\approx 1.66$-approximation with nearly linear runtime in~\cite{Grimmer2014}.

\subsection{Our Results}
\label{sec:OurResults}

{\sc MSCS} and {\sc DPA} both have approximation algorithms based on very similar ideas. This raises the question of how these two problems are related. To answer this question, we propose a new connectivity problem generalizing both of them called {\sc Star Strong Connectivity} ({\sc SSC}), defined as follows:
We call a set of arcs sharing a source endpoint a \emph{star}.
{\sc SSC} takes a strongly connected digraph $G=(V,E)$ and a set $\mathcal{C}$ of stars as input such that $\bigcup_{F\in \mathcal{C}} F=E$. Then {\sc SSC} finds a minimum cardinality set $R\subseteq \mathcal{C}$ such that $(V,\bigcup_{F\in R}F)$ is strongly connected.

{\sc SSC} is exactly {\sc MSCS} when all $F\in \mathcal{C}$ are restricted to have $|F|=1$. Under this restriction, choosing any star in {\sc SSC} would be equivalent to choosing its single arc in {\sc MSCS}. Further, we make the following claim relating {\sc SSC} and {\sc DPA} (proof of this lemma is deferred to Section~\ref{sec:1.5DPA}).

\begin{lemma} \label{lem:DPATOSSC}
When {\sc Star Strong Connectivity} has a bidirected input digraph $G$ (an arc $uv$ exists if and only if the arc $vu$ exists), it is equivalent to {\sc Dual Power Assignment}.
\end{lemma}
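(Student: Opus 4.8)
The plan is to prove equivalence by giving cost-preserving correspondences between feasible solutions in both directions, so that optimal values — and hence approximation ratios and integrality gaps — agree. First I would unpack the semantics of \textsc{DPA} on a bidirected instance $(G,c)$ with $c:E\to\{0,1\}$. Since every power level is nonnegative, each cost-$0$ arc lies in $H(p)$ for every assignment $p$, whereas a cost-$1$ arc $xy$ lies in $H(p)$ exactly when $p(x)=1$. Because $c$ is symmetric, the cost-$0$ arcs form a bidirected backbone $G_0$ whose connected components are already strongly connected at no cost, and the objective $\sum_v p(v)$ simply counts the high-power vertices. Thus $H(p)$ is always $G_0$ together with the cost-$1$ out-arcs of the chosen high-power vertices, and the whole difficulty is that this backbone is supplied for free while \textsc{SSC} charges one unit per selected star.

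For the direction from \textsc{DPA} to bidirected \textsc{SSC}, I would eliminate the free backbone by contracting each connected component of $G_0$ to a single vertex, obtaining a digraph $G'$. I would check that $G'$ is again bidirected: its surviving arcs are exactly the cost-$1$ arcs running between distinct components, and symmetry of $c$ guarantees that the reverse of each such arc survives as well; I would also note that a quotient of a strongly connected digraph is strongly connected, so $G'$ is a legal \textsc{SSC} input. The star set $\mathcal{C}$ is then the collection of projected cost-$1$ out-stars, one per original vertex $v$ (namely the arcs $C_i\to C_j$ witnessed by some cost-$1$ arc $vy$ with $y\in C_j$), and these cover $E(G')$ by construction. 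The map sending $p$ to the set $R_p$ of stars of its high-power vertices is cost-preserving, and I would argue that $H(p)$ is strongly connected in $G$ if and only if $\bigcup_{F\in R_p}F$ is strongly connected in $G'$, using that reachability inside a $G_0$-component is automatic while reachability across components uses precisely the selected cost-$1$ stars (so every component must contribute at least one high-power vertex, matching the \textsc{SSC} requirement that each source supply an out-star).

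For the converse I would reverse the contraction by a gadget that turns each star into a single unit-cost power choice: for a star $F$ with source $s(F)$ and targets $T_F$, introduce a fresh vertex $x_F$, link it to $s(F)$ by a bidirected cost-$0$ pair, and link it to each $w\in T_F$ by a bidirected cost-$1$ pair, while giving the original \textsc{SSC} vertices no cost-$1$ arcs of their own. Raising $x_F$ to high power activates the arcs $x_F\to w$, which compose with the free arc $s(F)\to x_F$ to realize exactly the directed reachabilities $s(F)\to w$ that selecting $F$ grants in \textsc{SSC}, and the objective again counts the upgraded gadgets. The construction is bidirected by design, and after contracting the cost-$0$ pairs $s(F)\leftrightarrow x_F$ one recovers $G'$ together with $\mathcal{C}$, confirming that this is inverse to the forward reduction.

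The hard part will be the two connectivity-equivalence arguments, and in particular the subtlety that a bidirected instance carries, for every cost-$1$ arc, its reverse twin, whose activation must never create reachability that no selected star provides; this is exactly the asymmetry \textsc{DPA} exploits through power, and I expect the main obstacle to be verifying that upgrading a vertex other than the intended gadget $x_F$ is never profitable, so that optimal \textsc{DPA} solutions correspond to \textsc{SSC} solutions of equal cost rather than strictly cheaper ones. Once the solution maps are shown to be mutually inverse and cost-preserving on optima, the claimed equivalence follows, and with it the transfer of approximation guarantees and integrality-gap bounds between the two problems.
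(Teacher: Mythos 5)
Your forward reduction ({\sc DPA} to {\sc SSC}) is essentially the paper's: contract the strongly connected components of the zero-cost subgraph and give each original vertex a star consisting of its cost-one arcs into other components; that part is fine. The genuine gap is in your reverse gadget, and it is not merely the unverified step you flagged --- it is false. Because {\sc DPA} instances must be bidirected, your ``bidirected cost-$1$ pair'' between $x_F$ and each $w \in T_F$ necessarily gives the original vertex $w$ a cost-one out-arc $w \to x_{F'}$ for \emph{every} star $F'$ with $w \in sinks(F')$; you cannot decree that original vertices have ``no cost-$1$ arcs of their own.'' Raising $w$ to high power then lets $w$ reach $source(F')$ (via the free arc $x_{F'} \to source(F')$) for all such $F'$ simultaneously, i.e.\ one unit of power buys $w$ reachability to all of its neighbors in $G$, which in {\sc SSC} may require selecting many stars. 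Concretely, let $G$ be the bidirected star with center $w$ and leaves $u_1,\dots,u_k$, every arc a singleton star. The {\sc SSC} optimum is $2k$, since every arc is forced. In your gadget, assigning high power to $w$ and to the $k$ vertices $x_{\{u_iw\}}$ gives a strongly connected $H(p)$ of cost $k+1$: we have $w \to x_{\{u_iw\}} \to u_i$ and $u_i \to x_{\{u_iw\}} \to w$, while each $x_{\{wu_i\}}$ hangs off $w$ by a free pair. So optima are not preserved, and your claim that contracting the cost-zero pairs recovers $(G',\mathcal{C})$ also fails: the contraction yields $G'$ with \emph{extra} stars, one per original vertex, aggregating the reverse arcs of all stars pointing at it.

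The paper's reverse construction avoids exactly this trap by having no original vertices at all: it creates one {\sc DPA} vertex $v_F$ per star $F$, joins the stars sharing a source by a zero-cost cycle, and adds a cost-one bidirected pair between $v_F$ and $v_{F'}$ precisely when $F$ contains an arc $uv$ whose reverse $vu$ lies in $F'$. Then the cost-one out-arcs of any vertex encode the selection of that single star, so high-power vertices and chosen stars are in a cost-preserving bijection. To salvage your approach you would need to redesign the gadget so that no vertex's cost-one out-arcs span more than one star's worth of reachability; the paper's star-per-vertex construction is the natural way to achieve this.
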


In some sense, {\sc SSC} has a more elegant formulation than {\sc DPA}. It removes the complexity of having two different classes of arcs. This benefit becomes very clear when constructing Integer Linear Programs for the two problems (our linear programs for {\sc SSC} and {\sc DPA} will be formally introduced in Section~\ref{sec:1.5DPA}). Besides this difference in elegance, both the resulting programs for {\sc DPA} and {\sc SSC} constrained to have a bidirected input digraph are equivalent to each other.

We introduce a novel dual-based approach to approximating connectivity problems. This methodology utilizes the cut-based linear programming relaxation of a connectivity problem.
In Section~\ref{sec:DualBased}, we give a full description of our approach and apply it to the problem of {\sc 2ECS}. The resulting algorithm is equivalent to the 3/2-approximation given in~\cite{Khuller1994}. Its value is solely in its simplicity and serving as an example of our approach.

Applying our dual-based method to {\sc DPA} gives a tight 3/2-approximation. This improves the previous best approximation known for {\sc DPA} of $11/7\approx 1.57$~\cite{Abu2014}. Our algorithm and its analysis are made simpler by viewing it as an instance of {\sc SSC} with a bidirected input digraph instead of {\sc DPA}. In Section~\ref{sec:1.5DPA}, we present our algorithm, prove its approximation ratio and show this bound is tight. 

\begin{theorem} \label{thm:1.5DPA}
{\sc Dual Power Assignment} has a dual-based 1.5-approximation algorithm.
\end{theorem}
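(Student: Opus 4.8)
The plan is to prove the theorem through the dual-based framework of Section~\ref{sec:DualBased}: for every instance I will exhibit both a feasible solution to the dual of the cut-based LP relaxation and an integral Star Strong Connectivity solution whose cardinality is at most $3/2$ times that dual value. By Lemma~\ref{lem:DPATOSSC} it suffices to argue in the setting of bidirected {\sc SSC}, which I adopt throughout. This lets me treat the cost-$0$ edges as a symmetric ``free'' subgraph---since $c(uv)=c(vu)=0$ makes both $u\to v$ and $v\to u$ usable at zero power---so that each cost-$0$-connected component is strongly connected for free, and each choice $p(v)=1$ is the purchase of the out-star at $v$.

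First I would write down the cut-covering relaxation. With a variable $x_F$ for each star $F\in\mathcal{C}$, strong connectivity is exactly the requirement that some selected star crosses every directed cut, so the LP minimizes $\sum_{F} x_F$ subject to $\sum_{F:\,F\cap\delta^+(S)\neq\emptyset} x_F\ge 1$ for every nonempty proper $S\subsetneq V$, with $x_F\ge 0$. Its dual places weights $y_S\ge 0$ on directed cuts, maximizing $\sum_S y_S$ subject to the packing constraint $\sum_{S:\,F\cap\delta^+(S)\neq\emptyset} y_S\le 1$ for every star $F$. By weak duality any feasible $\{y_S\}$ satisfies $\sum_S y_S\le \mathrm{OPT}_{LP}\le \mathrm{OPT}$, so a cut packing of value $L$ certifies $\mathrm{OPT}\ge L$, and a polynomial-time selection of at most $\tfrac32 L$ stars will yield both the $1.5$ ratio and the matching integrality-gap bound.

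Next, the algorithm. Mirroring the graph-carving view of {\sc 2ECS}, the idea is to contract the components already strongly connected through cost-$0$ edges and then build strong connectivity on the contracted digraph by selecting out-stars along a rooted search structure, always choosing a star that discharges an outgoing requirement of its own component while supplying an incoming requirement of another. The bidirectedness is the lever: because every arc has its reverse present, the dual can pack an in-cut and an out-cut at each contracted component without the two constraints competing over the same star, which is exactly the slack that general {\sc SSC}---at ratio $1.6$---does not enjoy.

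The core of the proof, and the step I expect to be the main obstacle, is building the matching dual certificate. I would attach unit (or half-unit) weights to a carefully chosen, essentially laminar family of cuts generated by the carving structure, verify that every star---selected or not---crosses cuts of total weight at most $1$, and then show the family's total weight $L$ is at least $\tfrac23$ of the number of stars bought. Making this $2{:}3$ accounting exact is the delicate part: one must argue that the only places the algorithm spends a ``third extra'' star are components forcing a fully packed cut, so that no dual capacity is wasted. Finally I would exhibit a small bidirected family of instances on which the algorithm is forced to buy $\tfrac32$ times the optimum, proving the ratio tight and that the integrality gap reaches $3/2$.
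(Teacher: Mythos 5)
Your setup coincides with the paper's: reduce to bidirected {\sc SSC} via Lemma~\ref{lem:DPATOSSC}, work with the cut-based primal/dual pair, and certify the ratio by exhibiting an integral primal solution together with an integral dual cut packing worth at least $\tfrac{2}{3}$ of the number of stars bought. That accounting target is indeed the right one: the paper achieves it by playing two integer dual solutions against the primal cost $|A(I)|=n+k-1$ (Lemma~\ref{lem:DPA-A(G)Bound}), namely the singleton cuts giving $|OPT(I)|\geq n$ (Lemma~\ref{lem:DPA-DualBound1}) and the internal cuts collected during contraction giving $|OPT(I)|\geq 2k$ (Lemma~\ref{lem:DPA-DualBound2}), whose convex combination yields the $3/2$ bound.

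However, there is a genuine gap at exactly the point you label ``the delicate part.'' The entire substance of the theorem is the combinatorial existence statement that makes the $2{:}3$ accounting possible: every bidirected instance contains a \emph{perfect set} (a contractible set of stars, obtained via Lemma~\ref{lem:augment}) admitting \emph{two star-disjoint internal cuts} (Lemma~\ref{lem:DPA-Construction}). This is what guarantees that each contraction, costing $i$ stars while eliminating $i-1$ vertices, also deposits two units of dual value that can never conflict with cuts chosen in later iterations. Your proposal neither states this structure nor constructs it; the algorithm you sketch---contract the free components, then select out-stars ``along a rooted search structure'' discharging requirements---reads as a one-pass, arborescence-style procedure, which is the kind of argument that gives the classical factor-$2$ bound rather than $3/2$. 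The paper instead iterates a path-growing/cycle-finding routine that outputs a cycle $C$ with two distinguished non-leaf vertices $\bar v,\bar w$ all of whose non-leaf neighbors lie on $C$, and then performs a case analysis (a star reaching two leaves; $\bar v=\bar w$; $\bar v\neq\bar w$ with or without a star hitting both a leaf and a cycle vertex) to extract the two star-disjoint internal cuts; no substitute for this analysis appears in your plan. Moreover, your guess that the two cuts are ``an in-cut and an out-cut at each contracted component'' does not match what the certificate must look like in general: the paper's cut pairs take different forms in different cases (e.g.\ $\{l\}$ and $V\setminus\{l\}$, or $\{\bar v\}\cup L_{\bar v}$ and $\{\bar w\}\cup L_{\bar w}$), and verifying star-disjointness there---not laminarity or half-integral weights, which are never needed---is precisely where bidirectedness is exploited. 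As it stands, the proposal is a correct framework with its load-bearing lemma missing.
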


Integer Linear Programs are often used to formulate NP-Complete problems. An approximation that uses a linear programming relaxation typically cannot give a better approximation ratio that the ratio between solutions of the integer and relaxed programs. This ratio is known as the integrality gap of a program. For minimization problems, it is formally defined as the supremum of the ratio between the optimal integer solution and the optimal fractional solution over all problem instances.
As a result of our analysis for Theorem~\ref{thm:1.5DPA}, we prove an upper bound of 1.5 for {\sc DPA}'s integrality gap. This improves the previous upper bound of 1.85 proven in~\cite{Calinescu2013}.

\begin{corollary} \label{cor:1.5DPA-IntegralityUpper}
The integrality gap of the standard cut-based linear program for {\sc Dual Power Assignment} is at most 1.5.
\end{corollary}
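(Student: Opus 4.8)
The plan is to exploit the fact that the approximation ratio in Theorem~\ref{thm:1.5DPA} is proven not against the optimal \emph{integral} solution directly, but against a feasible solution to the \emph{dual} of the standard cut-based linear program --- this is exactly what makes the algorithm ``dual-based.'' In the course of the proof of Theorem~\ref{thm:1.5DPA} one constructs an assignment of nonnegative weights to the cuts of $G$ whose total value both lower-bounds the algorithm's output up to the factor $1.5$ and respects every dual constraint. The corollary will then follow purely from weak linear programming duality.

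First I would write down the standard cut-based relaxation for {\sc DPA} together with its dual. The primal assigns a fractional value to each star and requires, for every cut $(S,V\setminus S)$, that the total value on the stars crossing the cut be at least one; minimizing the total fractional value gives the optimal fractional cost $\mathrm{OPT}_{LP}$. The dual places a nonnegative weight $y_S$ on each cut and maximizes $\sum_S y_S$ subject to the requirement that, for every star, the combined weight of the cuts it crosses does not exceed its unit cost. Let $D=\sum_S y_S$ be the value of the particular dual solution produced inside the proof of Theorem~\ref{thm:1.5DPA}, and let $\mathrm{ALG}$ denote the cost of the algorithm's output.

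The concluding step is a short chain of inequalities. The analysis of Theorem~\ref{thm:1.5DPA} supplies $\mathrm{ALG}\le 1.5\,D$, weak duality gives $D\le \mathrm{OPT}_{LP}$ because $y$ is dual-feasible, and the feasibility of the algorithm's integral output gives $\mathrm{OPT}_{IP}\le \mathrm{ALG}$. Chaining these yields $\mathrm{OPT}_{IP}\le \mathrm{ALG}\le 1.5\,D\le 1.5\,\mathrm{OPT}_{LP}$ on every instance, so $\mathrm{OPT}_{IP}/\mathrm{OPT}_{LP}\le 1.5$, and taking the supremum over all instances bounds the integrality gap by $1.5$.

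I expect the only real obstacle to be confirming that the object built during the proof of Theorem~\ref{thm:1.5DPA} is genuinely feasible for the dual of the standard cut-based program, rather than merely an accounting device for the approximation ratio. Once that feasibility is in hand, no further instance-specific reasoning is needed: weak duality does all the remaining work, and the integrality gap bound is an immediate consequence.
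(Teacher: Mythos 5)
Your proposal is correct and matches the paper's own argument: the paper also bounds the algorithm's output by $1.5$ times the value of a dual-feasible solution (the larger of the two constructed in Lemmas~\ref{lem:DPA-DualBound1} and~\ref{lem:DPA-DualBound2}, or equivalently their convex combination), then invokes weak duality and the feasibility of the integral output $A(I)$ to conclude $\mathrm{OPT}_{IP} \le A(I) \le 1.5\,\mathrm{OPT}_{LP}$. The dual feasibility you flag as the "only real obstacle" is exactly what those two lemmas establish, so nothing further is needed.
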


In fact, we prove a slightly stronger statement. Our algorithm constructs integer primal and integer dual solutions. As a result, the gap between integer solutions of these two problems is at most 1.5.

Now, we turn our focus to the more general problem of {\sc SSC}. Our algorithm for {\sc DPA} is dependent on the underlying digraph being bidirected. Further the 3/2-approximation for {\sc MSCS} in~\cite{Vetta2001} and 11/7-approximation for {\sc DPA} in~\cite{Abu2014} do not seem to generalize to {\sc SSC}. However the greedy approach used by Khuller et al. in~\cite{Khuller1994} and~\cite{Khuller1995} on {\sc MSCS} and by C{\u{a}}linescu in~\cite{Calinescu2014} on {\sc DPA} appears to generalize easily to {\sc SSC}.

\begin{claim}
{\sc SSC} has a $1.61+\epsilon$ polynomial approximation algorithm using a simple variation of the greedy algorithm of~\cite{Khuller1995}.
\end{claim}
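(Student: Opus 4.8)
The plan is to adapt the greedy algorithm of Khuller, Raghavachari and Young (KRY) for {\sc MSCS} to the more general setting of {\sc SSC}. The KRY algorithm works by iteratively contracting cycles: it searches for cycles (initially long cycles, then progressively shorter ones) that can be added to the partial solution while ensuring each new cycle decreases the number of strongly connected components. The key quantity is the ``gain'' of adding a cycle, namely the number of vertices or components it merges relative to the number of edges it uses. I would first reformulate {\sc SSC} so that this greedy machinery applies: in {\sc SSC} the atomic objects we pay for are stars in $\mathcal{C}$, not individual arcs, so the cost of incorporating a cycle is measured by the number of \emph{new} stars it requires rather than the number of arcs.

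First I would set up the contraction framework. Starting from the digraph $(V,\emptyset)$ with the cost of chosen stars tracked separately, I would repeatedly contract strongly connected components so that at each stage the current graph is a collection of nodes (representing components already made strongly connected) that still need to be linked. The greedy step selects, for a target cycle length $k$ (ranging from large down to small), a cycle through $k$ of the current nodes whose arcs can be supplied by stars, preferring cycles that reuse stars already purchased. The analysis of KRY shows that handling long cycles greedily and then short cycles yields the $1.61+\epsilon$ bound; the $\epsilon$ arises because finding the longest such cycle is itself NP-hard, so one settles for cycles of length at least some fixed constant $k_0$, and the approximation ratio degrades gracefully as $k_0$ grows, approaching the harmonic-number constant $\approx 1.617$.

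The central adaptation, and the main obstacle, is accounting for the fact that a single star can contribute arcs to several cycles simultaneously. In {\sc MSCS} each arc contributes to exactly one cycle in the solution, so counting arcs and counting cycle-edges coincide; in {\sc SSC} a star already paid for can supply an out-arc ``for free'' to any later cycle, which only helps the algorithm but complicates the charging argument. I would therefore recast the greedy gain so that when we contract a cycle we charge only the number of stars newly added to $R$, and argue that the harmonic-sum analysis of KRY still bounds the total number of stars chosen against the optimum $|R^*|$. The technical heart is verifying that the structural lemma underlying KRY's analysis — that a strongly connected graph on $n$ nodes needing to be covered admits a decomposition into cycles whose total length is bounded in terms of the optimum — transfers to the star-cost setting, where ``length'' is replaced by ``number of distinct stars used.''

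Concretely, I would (i) define the star-weighted contraction process and its greedy cycle-selection subroutine, reducing the problem of finding a good cycle to a polynomially solvable matching or shortest-path computation when $k$ is bounded; (ii) establish the analogue of KRY's key inequality bounding the incremental star-cost of each greedy phase by $(1 + 1/k)$ times the optimum contribution; (iii) sum over the phases $k = k_0, k_0-1, \dots, 2$ to obtain the $1.61 + \epsilon$ ratio, where $\epsilon \to 0$ as $k_0 \to \infty$; and (iv) confirm polynomial runtime since each phase uses a bounded cycle length. I expect step (ii) to be where the genuine work lies, since it is precisely there that the star structure departs from the arc structure of {\sc MSCS} and the reuse of already-purchased stars must be handled without breaking the amortized counting.
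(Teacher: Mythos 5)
First, a caveat about the comparison itself: the paper never proves this Claim. It is stated bare, justified only by the surrounding remark that the greedy of Khuller, Raghavachari and Young and its adaptation to {\sc DPA} by C{\u{a}}linescu~\cite{Calinescu2014} ``appears to generalize easily'' to {\sc SSC}. So there is no proof in the paper to measure you against, and at the level of strategy your outline follows exactly the route the paper has in mind: phase-based contraction of large structures, cost counted in stars rather than arcs, with the $\epsilon$ coming from capping the structure size at a constant $k_0$ because finding the largest such structure is NP-hard.

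As a proof attempt, however, there is a genuine gap, and it sits precisely where you predicted (your step (ii)) --- but it is worse than an unverified step: the structural lemma in the form you state it, a ``decomposition into cycles whose total length is bounded,'' with length re-measured as the number of distinct stars, is the wrong statement for {\sc SSC}, and the cycle-based version is false. Consider $V=\{a,b,c\}$ with stars $F_c=\{ca,cb\}$, $F_a=\{ac\}$, $F_b=\{bc\}$. The optimal solution is $\{F_a,F_b,F_c\}$, of size $m=n=3$, yet the longest cycle in its union graph has length $2$, so any analogue of the KRY bound (a cycle of length at least $n/(m-n+1)=3$) fails when $m$ counts stars. What survives is the quasiperfect-set version: $\{F_a,F_b,F_c\}$ \emph{is} a quasiperfect set of size $3$ in the paper's sense (stars with distinct sources whose arcs strongly connect those sources). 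Hence the greedy must search for and contract large \emph{quasiperfect} sets --- closed up into perfect sets via Lemma~\ref{lem:augment} --- rather than long cycles of the current digraph, and the KRY lower bound must be restated and re-proven in the form ``if the instance admits no quasiperfect set of size at least $\ell$, then $|OPT|$ exceeds $n$ by roughly a factor $1+1/\ell$.'' That reformulated lemma is the entire mathematical content of the Claim (it is the analogue of what C{\u{a}}linescu establishes for {\sc DPA}), and your proposal never engages with it. Two smaller remarks: the ``reuse'' complication you flag (already-purchased stars supplying free arcs to later cycles) disappears entirely if you contract perfect sets, since a perfect set's stars have all sinks inside the contracted vertex set and vanish upon contraction; and the expansion from quasiperfect to perfect only helps the accounting, because each star added by Lemma~\ref{lem:augment} also adds a new merged source, so the cost per merged vertex weakly decreases.
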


In Section~\ref{sec:1.6SSC}, we improve on this $1.61+\epsilon$-approximation by applying our dual-based approach to {\sc SSC}. This produces an algorithm with a tight 1.6-approximation ratio. Since {\sc MSCS} is a subproblem of {\sc SSC}, this approximation ratio also extends to it. 
As with our approximation of {\sc DPA}, we observe that an upper bound on the integrality gap follows from our analysis.

\begin{theorem} \label{thm:1.6SSC}
{\sc Star Strong Connectivity} has a dual-based 1.6-approximation algorithm.
\end{theorem}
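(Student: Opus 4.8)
The plan is to mirror the dual-based framework of Section~\ref{sec:DualBased}: for every instance, exhibit an integral primal solution $R$ together with an integral feasible dual solution whose objective is at least $\tfrac{5}{8}|R|$, and then invoke weak duality. Writing $n=|V|$, the cut-based relaxation has a variable $x_F$ per star and one covering constraint per nontrivial out-cut, with the packing dual
\[
\min \sum_{F\in\mathcal{C}} x_F \ \text{ s.t. } \sum_{F:\,F\cap\delta^+(S)\neq\emptyset} x_F \ge 1\ \ (\emptyset\neq S\subsetneq V),\ x\ge 0,
\]
\[
\max \sum_{\emptyset\neq S\subsetneq V} y_S \ \text{ s.t. } \sum_{S:\,F\cap\delta^+(S)\neq\emptyset} y_S \le 1\ \ (F\in\mathcal{C}),\ y\ge 0 .
\]
Any feasible $y$ obeys $\sum_S y_S \le \mathrm{OPT}$, so returning $R$ and $y$ with $|R|\le\tfrac{8}{5}\sum_S y_S$ gives a $1.6$-approximation and simultaneously bounds the integrality gap.

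The baseline on both sides coincides. Because the output is strongly connected, every vertex needs an outgoing arc, and an arc out of $v$ can only come from a star sourced at $v$; hence $R$ contains at least one star per vertex and $|R|\ge n$. Dually, a star sourced at $v$ is the only kind of star crossing the singleton out-cut $\delta^+(\{v\})$, so setting $y_{\{v\}}=1$ for all $v$ is feasible and already certifies $\mathrm{OPT}\ge n$. All of the difficulty is therefore in the \emph{extra} stars beyond this one-per-vertex baseline, and in manufacturing matching extra dual value.

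For the algorithm I would follow the carving/contraction instantiation of the framework: on the current contracted digraph, select stars realizing a family of vertex-disjoint directed cycles, contract each cycle, and recurse until one vertex remains; the union of all selected stars is strongly connected. Realizing a cycle on $\ell$ vertices costs $\ell$ stars but merges $\ell$ vertices into one, so long cycles make fast progress while short cycles—especially length-two cycles—are wasteful and contribute many extra stars through the recursion. In parallel I would maintain a feasible dual: start from $y_{\{v\}}=1$ on the singleton out-cuts and, as pieces are contracted, add dual weight on the cuts separating contracted components, always respecting the packing constraint. The purpose of the dual is to certify a larger lower bound in exactly the regime—short cycles, slow contraction—where the primal is forced to spend the most, which is what improves the $5/3$ bound for {\sc MSCS} to $8/5$.

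The heart of the argument, and the step I expect to be hardest, is the charging that yields exactly $8/5$. I would amortize: split $R$ into the $n$ baseline stars and the extra stars incurred by the recursion, and charge each extra star against dual weight generated in the contraction step responsible for it, so that every slow-contraction configuration is paid for by enough packed cut value. Driving the amortized ratio down to $\tfrac{5}{8}$ of dual per star, while keeping $y$ globally feasible, demands a careful case analysis on the cycle lengths at each level; the delicate point is that a star sourced at $v$ with heads $t_1,\dots,t_k$ crosses not only $\delta^+(\{v\})$ but every co-singleton in-cut $V\setminus\{t_i\}$, so the cuts certifying one configuration must not overload the packing constraints of stars used elsewhere. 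Finally I would give a family of tight instances meeting $\tfrac{8}{5}\sum_S y_S$ to establish tightness, and observe that, since $R$ and $y$ are both integral, the same analysis bounds the integrality gap, as with Corollary~\ref{cor:1.5DPA-IntegralityUpper}.
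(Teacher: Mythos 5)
Your proposal sets up the right LP/dual pair and the right high-level framework, but it has two genuine gaps, one combinatorial and one in the dual bookkeeping.

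First, the combinatorial heart of the proof is missing, and you acknowledge as much. The paper's Theorem~\ref{thm:1.6SSC} rests entirely on Lemma~\ref{lem:1.6SSC-Construction}: every SSC instance contains a \emph{perfect set} $Q$ having either $|Q|\geq 4$ and one internal cut, or two star-disjoint internal cuts. This is what makes the $8/5$ ratio come out: perfect sets of size $2$ and $3$ each contribute $2$ to the dual (Lemma~\ref{lem:SSC-DualBound2}), larger ones contribute $1$, and the worst case $\frac{\sum i A_i}{\frac{5}{4}A_2+2A_3+\sum(\frac{3}{4}i-\frac{1}{2})A_i}=8/5$ is attained at size $3$. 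Your sketch never states this dichotomy, let alone proves it; the case analysis on $|C|\in\{2,3,\geq 4\}$ with its subcases is the entire difficulty. Relatedly, you propose contracting ``vertex-disjoint directed cycles,'' but in SSC a star realizing a cycle arc may have additional arcs whose heads lie outside the cycle, so contracting the cycle's sources does not yield another SSC instance. This is exactly why the paper works with quasiperfect/perfect sets and needs Lemma~\ref{lem:augment} to expand any quasiperfect set into a perfect one before contracting; your plan does not address this.

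Second, your dual maintenance scheme cannot work as stated. You propose to start from $y_{\{v\}}=1$ on all singletons and then ``add dual weight on the cuts separating contracted components, always respecting the packing constraint.'' But every star $F$ crosses the singleton cut $\{source(F)\}$, so once all singletons carry weight $1$, every star's packing constraint is already tight; and since the input digraph is strongly connected, every nontrivial cut is crossed by some star. Hence \emph{no} additional weight can be placed on \emph{any} cut, and your single dual solution is stuck at objective $n$, which only certifies the trivial bound and cannot beat the ratio $\frac{n+k-1}{n}$. The paper avoids this trap by building two \emph{separate} feasible duals --- the all-singletons dual (Lemma~\ref{lem:SSC-DualBound1}) and the internal-cuts dual (Lemma~\ref{lem:SSC-DualBound2}) --- and then taking a convex combination of the two resulting \emph{lower bounds} on $|OPT(I)|$ (weights $3/4$ and $1/4$), which is valid without the two collections of cuts being jointly feasible. (This also means the integrality-gap corollary follows from the larger of the two duals, not from a merged one.) To repair your argument you would need both the perfect-set dichotomy lemma and this two-dual convex-combination device in place of the merged dual.
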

\begin{corollary} \label{cor:1.6SSC-IntegralityUpper}
The integrality gap of the standard cut-based linear program for {\sc Star Strong Connectivity} (and thus {\sc MSCS}) is at most 1.6.
\end{corollary}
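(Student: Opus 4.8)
The plan is to extract the integrality-gap bound directly from the machinery already built to prove Theorem~\ref{thm:1.6SSC}, mirroring how Corollary~\ref{cor:1.5DPA-IntegralityUpper} is obtained from Theorem~\ref{thm:1.5DPA}. The essential observation is that the dual-based algorithm does not merely certify its output against the unknown integer optimum $\mathrm{OPT}$; it certifies it against an explicitly constructed \emph{feasible} solution to the dual of the cut-based relaxation. So the first step is to write down the standard cut-based linear program for {\sc SSC} --- one variable $x_F \in [0,1]$ per star $F \in \mathcal{C}$, and one covering constraint $\sum_{F \in \delta(S)} x_F \ge 1$ for each nonempty proper $S \subsetneq V$, where $\delta(S)$ denotes the stars containing an arc leaving $S$ --- together with its dual, which maximizes $\sum_S y_S$ subject to $\sum_{S : F \in \delta(S)} y_S \le 1$ for every star $F$.

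Next I would recall from the proof of Theorem~\ref{thm:1.6SSC} that the algorithm produces (i) a feasible integer primal $R \subseteq \mathcal{C}$ of cost $|R|$, and (ii) a feasible dual assignment $y$, and that the approximation analysis is carried out by bounding $|R| \le 1.6 \cdot \sum_S y_S$ rather than directly against $\mathrm{OPT}$. This is exactly what earns the method the name ``dual-based.'' Granting this, the corollary is an immediate chain of inequalities: by weak LP duality the dual value $\sum_S y_S$ lower-bounds the optimal fractional value $\mathrm{LP}$; since $R$ is a feasible integer solution we have $\mathrm{OPT} \le |R|$; combining these with the analysis yields
\[
\mathrm{OPT} \;\le\; |R| \;\le\; 1.6 \sum_S y_S \;\le\; 1.6\,\mathrm{LP}.
\]
As this holds on every instance, the supremum defining the integrality gap is at most $1.6$.

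Finally I would invoke the reduction already noted in Subsection~\ref{sec:OurResults}: {\sc MSCS} is the special case of {\sc SSC} in which every star is a singleton, so its cut-based relaxation is literally the {\sc SSC} relaxation restricted to $|F| = 1$, and the same bound transfers verbatim. I expect the only real obstacle to be a bookkeeping one: confirming that the dual vector $y$ produced inside the proof of Theorem~\ref{thm:1.6SSC} is genuinely feasible for the program written above (that no star's dual load exceeds $1$), and that the factor $1.6$ in the analysis is indeed measured against $\sum_S y_S$ and not against $\mathrm{OPT}$ by some separate combinatorial route. Once those two facts are verified, no new argument is required and weak duality closes the corollary.
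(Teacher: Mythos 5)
Your proposal is correct and takes essentially the same route as the paper: the paper's Lemmas~\ref{lem:SSC-DualBound1} and~\ref{lem:SSC-DualBound2} lower-bound the optimum by explicitly dual-feasible objectives, so the chain $\mathrm{OPT}(I) \le |A(I)| \le 1.6\cdot(\text{dual objective}) \le 1.6\cdot\mathrm{LP}(I)$ closes via weak duality, and the singleton-star specialization transfers the bound to {\sc MSCS}. The only cosmetic difference is that the paper works with two integer dual solutions (combining their objectives in a convex combination for the ratio, and taking the larger of the two for the integer primal--dual gap remark), rather than a single fixed dual vector $y$ as in your write-up.
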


Again, we actually prove a slightly stronger statement. Since our algorithm constructs integer primal and integer dual solutions, the primal-dual integer gap of these two problems is at most 1.6.

A lower bound on the integrality gap of a problem provides a bound on the quality of approximation that can be achieved with certain methods. Linear program rounding, primal-dual algorithms, and our dual-based algorithms are all limited by this value. Recently, Laekhanukit et al.\ proved the integrality gap of {\sc MSCS} is at least $3/2-\epsilon$ for any $\epsilon>0$~\cite{Laekhanukit2012}.


\section{Dual-Based Methodology}
\label{sec:DualBased}

In Subsection~\ref{sec:DualBased-Intro}, we describe the typical form of integer linear programs (ILPs) related to graph connectivity problems and give a high-level description of our dual-based approach for a general problem. Finally, we apply our dual-based approach to the problem of {\sc 2ECS} as a simple example.

\subsection{Cut-Based Linear Programs}
\label{sec:DualBased-Intro}


All the connectivity problems considered in this work have cut-based ILPs. In this type of program, the constraints are based on having at least a certain number of edges or arcs crossing each cut of the graph. We will give the standard cut-based ILP for {\sc Minimum 2-Edge-Connected Spanning Subgraph} to demonstrate this structure.
For a cut $\emptyset\subset S\subset V$, we use $\partial E(S)$ to denote all edges with exactly one endpoint in $S$. Then the standard cut-based linear programming relaxation for {\sc 2ECS} is the following.

\begin{alignat*}{2}
	\text{{\sc 2ECS Primal LP}} \\
	\text{minimize }   & \sum_{e\in E} x_e \\
    \text{subject to } & \sum_{e\in \partial E(S)}x_e\geq 2 \ & , \ & \bigforall \emptyset\subset S\subset V\\
                       & x_e\geq 0 \ & , \ & \bigforall e\in E
\end{alignat*}

The integer programming formulation for {\sc 2ECS} is given by adding the constraint that all $x_e$ are integer valued. The linear program will always have objective less than or equal to the objective of the integer program. 

It is worth noting that this type of program has an exponential number of constraints, but it could be converted into a polynomial sized program using flow-based constraints. Previous algorithms have taken advantage of polynomial time linear program solvers to approximate {\sc 2ECS} using its linear programming relaxation. However, our dual-based algorithms do not need to solve the linear program. As a result, we can keep it in the simpler cut-based form.

Our method takes advantage of the corresponding dual linear program. The dual program will always have objective at most that of the original (primal) program. This property is known as weak duality. In fact, their optimal solutions will have equal objective, but we do not need this stronger property. The dual program corresponding to {\sc 2ECS} is the following.

\begin{alignat*}{2}
	\text{{\sc 2ECS Dual LP}} \\
	\text{maximize }   & \sum_{\emptyset\subset S\subset V} 2y_S \\
    \text{subject to } & \sum_{e\in \partial E(S)} y_S\leq 1 \ & , \ & \bigforall e\in E\\
                       & y_S\geq 0 \ & , \ & \bigforall \emptyset\subset S\subset V
\end{alignat*}

Since the optimal solution to our integer program is lowerbounded by that of the primal linear program, we know that the optimal integer solution is lowerbounded by every feasible solution to the dual program.

The basic idea of our dual-based method is to build a feasible dual solution while constructing our integer primal solution. We construct our primal solution by repeatedly finding and contracting a problem specific type of subgraph: a cycle for {\sc 2ECS}, a perfect set for {\sc SSC} (defined later).
Using a cut-based linear program, a dual solution will be a set of disjoint cuts (where the exact definition of disjoint is problem specific).
We are interested in cuts that are disjoint from all cuts after contracting a subgraph. Later, we formally define these as internal cuts. We choose the subgraph to contract based on it having a number of disjoint internal cuts. When our algorithm terminates, the union of these disjoint internal cuts in each iteration will give a feasible dual solution.

To apply this to a new connectivity problem, we first must define a contractible subgraph such that repeated contraction will yield a feasible primal solution. Definitions for disjoint and internal cuts will follow from the cut-based program and the contractible subgraphs. Finally, any polynomial runtime procedure that constructs a contractible subgraph with at least one internal cut produces an algorithm creating integer primal and dual solutions. The quality of the approximation depends directly on the number of internal cuts in each contraction. Our dual solution could use fractional cuts. However, this did not result in improvements in the approximation bounds for the problems studied in this paper.

\subsection{Application to 2ECS}
\label{sec:1.5-2ECS}

We will illustrate our dual-based approach by giving a straightforward 3/2-approximation to {\sc 2ECS}. This is neither best known in approximation ratio nor runtime. Its value is to serve as a simple example of our dual-based approach. In~\cite{Khuller1994}, an equivalent 3/2-approximation is given for {\sc 2ECS} that implicitly uses the dual bound. They also show that simple modifications of the algorithm allow it to run in linear time.

Our approximation is based on iteratively selecting and contracting cycles in the graph until the graph is reduced to a single vertex (this approach has been used by multiple previous approximations).
We claim that such a procedure will always produce a 2-edge-connected spanning subgraph. Consider any cut $\emptyset \subset S \subset V$ in the graph. At some point, we will select a cycle with vertices in both $S$ and $V \setminus S$. This cycle must have two edges crossing the cut. Thus such a procedure will always produce a feasible solution.

Consider the primal and dual programs for {\sc 2ECS} given in Section~\ref{sec:DualBased-Intro}. When the dual problem is restricted to integer values, it can be thought of as choosing a set of cuts such that no two cuts share any edges. Our algorithm builds a solution to this dual problem to lower bound the optimal primal solution.

We say a cut $S$ is \emph{internal} to a cycle $C$ if all edges in $\partial E(S)$ have both endpoints in the vertices of $C$. To contract a cycle means to replace all vertices of the cycle with a single supervertex whose edge set is all edges with exactly one endpoint in the cycle. When contracting a cycle, we keep duplicate edges (and thus the resulting structure is a 2-edge-connected multigraph). Contracting a cycle with an internal cut will remove all edges crossing that cut from the graph. As a result, after repeated contraction of cycles each with an internal cut, the set of all these internal cuts is dual feasible.
Following from this, our algorithm will find a cycle with an internal cut, add the edges of the cycle to our approximate solution, contract the vertices of our cycle, and then repeat.
Complete description of this process is given in Algorithm~\ref{alg:1.5-2ECS}.

\begin{algorithm}
\caption{Dual-Based Approximation for {\sc 2ECS}} \label{alg:1.5-2ECS}
\begin{algorithmic}[1]
  \STATE $R=\emptyset$
  \WHILE{$|V|\neq 1$}
    \STATE Find a cycle $C$ with an internal cut as shown in Lemma~\ref{lem:2ECS-Construction}
	\STATE Contract the vertices of $C$ into a single vertex
    \STATE $R:=R\cup E(C)$
  \ENDWHILE
\end{algorithmic}
\end{algorithm}

\begin{lemma}\label{lem:2ECS-Construction}
Every 2-edge-connected multigraph has a cycle with an internal cut.
\end{lemma}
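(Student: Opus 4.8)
The plan is to give an explicit construction via depth-first search, which also supplies the cycle that Algorithm~\ref{alg:1.5-2ECS} invokes. First I would record the reformulation implicit in the definitions: a cut $S$ is internal to a cycle $C$ exactly when every edge of $\partial E(S)$ has both endpoints in $V(C)$. In particular, for a single vertex $v$ the cut $S=\{v\}$ has $\partial E(S)$ equal to the set of edges incident to $v$, so $\{v\}$ is internal to $C$ as soon as $v\in V(C)$ and every neighbor of $v$ also lies on $C$. Thus it suffices to produce a cycle $C$ together with a vertex $v\in V(C)$ all of whose neighbors lie on $C$.

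To find such a $v$ and $C$, I would root a depth-first search at an arbitrary vertex and use the standard fact that in an undirected (multi)graph a DFS produces only tree edges and back edges, with no cross edges, so every non-tree edge joins a vertex to one of its ancestors. Let $v$ be any leaf of the DFS tree other than the root (one exists since $|V|\geq 2$). Because $v$ has no descendants and no cross edges, every edge incident to $v$ runs to an ancestor of $v$: the tree edge to its parent together with some back edges. Let $a$ be the shallowest ancestor adjacent to $v$, and take $C$ to be the fundamental cycle consisting of the tree path from $a$ down to $v$ closed off by the edge $va$. Every neighbor of $v$ is an ancestor of $v$ at depth between that of $a$ and that of $v$, hence lies on this tree path, so $N(v)\subseteq V(C)$ and $v\in V(C)$. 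By the reformulation, $\{v\}$ is internal to $C$.

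The only points requiring care are the degenerate cases, and these are exactly where $2$-edge-connectedness enters. I must check that $C$ is a genuine cycle: if the sole neighbor of $v$ were its parent $p$, then the tree edge $vp$ would be a bridge, contradicting $2$-edge-connectedness; so either $a$ is a proper ancestor (yielding a cycle of length at least three in the simple case) or $v$ and $p$ are joined by parallel edges (yielding a valid length-two cycle in the multigraph case). In every case $a\neq v$, the path-plus-edge is a bona fide cycle, and all of $N(v)$ sits on it. I expect the main obstacle to be purely expository: justifying the no-cross-edge property of undirected DFS and cleanly dispatching the multigraph degeneracies (parallel edges, and loops at $v$, which may be discarded since they cross no cut). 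Since the construction is explicit and runs in linear time, it is precisely the subroutine called in line 3 of Algorithm~\ref{alg:1.5-2ECS}.
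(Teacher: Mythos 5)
Your DFS-based construction is essentially the paper's own proof in different clothing: the paper greedily grows a path (which is exactly the first branch of a depth-first search), stops at a vertex $\bar v$ all of whose neighbors lie on the path (your non-root DFS leaf), and closes the cycle through the earliest such neighbor on the path (your shallowest adjacent ancestor), so the resulting cycle and internal cut $\{\bar v\}$ coincide with yours. The proposal is correct, and your explicit treatment of the multigraph degeneracies (parallel edges yielding valid 2-cycles, the bridge contradiction when only one edge meets the leaf, discarding loops) is in fact more careful than the paper's, which leaves those cases implicit.
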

\begin{proof}
Let $N(v)$ denote the neighbors of a vertex $v$. We give a direct construction for a cycle $C$ with a vertex $\bar v$ such that all neighbors of $\bar v$ are in the cycle. Then the cut $\{\bar v\}$ will be internal to this cycle. Our construction maintains a path $P$ and repeatedly updates a vertex $\bar v$ to be the last vertex of the path as it grows. 

\begin{algorithmic}[1]
  \STATE Set $P$ to any edge in $G$
  \STATE Set $\bar v$ to be the last vertex in the path $P$
  \WHILE{$\exists u\in N(\bar v) \setminus V(P)$}
    \STATE $P:=P$ concatenated with the edge $\bar vu$
    \STATE $\bar v:=u$
  \ENDWHILE
  \STATE Set $w$ to be the vertex in $N(\bar v)$ earliest in $P$ 
  \STATE Set $C$ to the cycle using $\bar vw$ and edges in $P$
\end{algorithmic}
Our choice of $C$ immediately gives us that $N(\bar v)\subset V(C)$, which implies that the cut $\{\bar v\}$ is internal to $C$.
 \end{proof}

Let $n$ denote the number of vertices and $k$ denote the number of cycles contracted by Algorithm~\ref{alg:1.5-2ECS}. Then we bound the optimal objective value (denoted by $|OPT|$) and the objective value of this algorithm's output (denoted by $|R|$) as follows:
\begin{lemma}
$|OPT| \geq \max\{n,2k\}$
\end{lemma}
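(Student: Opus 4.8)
The plan is to prove the two inequalities $|OPT| \geq n$ and $|OPT| \geq 2k$ separately, since the maximum of two lower bounds is itself a lower bound.

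First I would establish $|OPT| \geq n$. Any feasible solution to {\sc 2ECS} is a $2$-edge-connected spanning subgraph on all $n$ vertices of the original graph. In such a subgraph, every vertex must have degree at least $2$ (a vertex of degree $0$ or $1$ cannot lie in a $2$-edge-connected graph, as removing its single incident edge, or the vertex being isolated, disconnects it). Summing degrees gives $2|OPT| = \sum_{v} \deg(v) \geq 2n$, hence $|OPT| \geq n$. This is the easy half and uses only the degree characterization of $2$-edge-connectivity.

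The heart of the argument is $|OPT| \geq 2k$, and this is where I would invoke the dual-based machinery developed just above. The key observation is that each of the $k$ cycles contracted by Algorithm~\ref{alg:1.5-2ECS} carries an internal cut, guaranteed to exist by Lemma~\ref{lem:2ECS-Construction}. I would argue that these $k$ internal cuts are pairwise edge-disjoint: once a cycle with internal cut $S$ is contracted, all edges of $\partial E(S)$ have both endpoints inside that cycle and so are absorbed into the supervertex and removed from the working multigraph; therefore any internal cut found in a later iteration involves only edges still present, which are disjoint from $\partial E(S)$. Setting $y_S = 1$ for each of these $k$ cuts and $y_S = 0$ otherwise thus yields a feasible solution to the {\sc 2ECS Dual LP}, because the constraint $\sum_{e \in \partial E(S)} y_S \leq 1$ is satisfied for every edge $e$ (each edge lies in the boundary of at most one chosen cut). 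The dual objective of this solution is $\sum_{\emptyset \subset S \subset V} 2 y_S = 2k$.

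Finally I would close the argument by weak duality: as noted in Subsection~\ref{sec:DualBased-Intro}, the optimal integer primal value $|OPT|$ is lower-bounded by the primal LP optimum, which in turn is lower-bounded by any feasible dual value, so $|OPT| \geq 2k$. Combining the two bounds gives $|OPT| \geq \max\{n, 2k\}$. The main obstacle I anticipate is the edge-disjointness claim for the internal cuts; one must be careful that contraction genuinely deletes the boundary edges of an internal cut (rather than merely relabeling their endpoints) so that no edge can be charged to two different chosen cuts, which is precisely what the definition of \emph{internal} and the contraction operation are set up to guarantee.
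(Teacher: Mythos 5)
Your proposal is correct and follows essentially the same argument as the paper: the bound $|OPT|\geq n$ from the fact that every feasible solution needs at least $n$ edges (which you justify via the minimum-degree-2 observation), and the bound $|OPT|\geq 2k$ by assigning $y_S=1$ to the internal cut of each contracted cycle, arguing these cuts are edge-disjoint because contraction removes all edges crossing an internal cut, and invoking weak duality. The only difference is that you spell out the degree-counting and edge-disjointness details more explicitly than the paper does.
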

\begin{proof}
Consider the dual solution given by combining the internal cuts in each cycle. This is feasible since any edge crossed by one of these internal cuts is removed from the graph in the following contraction. Thus we have a dual feasible solution with objective $2k$. Further any {\sc 2ECS} solution must have at least $n$ edges. Then the cost of the optimal solution is at least $\max\{n, 2k\}$.  
\end{proof}
\begin{lemma}
$|R| = n+k-1 $
\end{lemma}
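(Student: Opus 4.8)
The plan is to prove the identity by tracking two bookkeeping quantities across the $k$ iterations of Algorithm~\ref{alg:1.5-2ECS}: the number of edges added to $R$ and the net decrease in the vertex count. First I would let $c_i$ denote the number of vertices lying on the cycle $C$ contracted during iteration $i$, for $i = 1, \dots, k$. The elementary fact I would invoke is that a cycle on $c_i$ vertices has exactly $c_i$ edges, so iteration $i$ contributes exactly $c_i$ edges to $R$ through the update $R := R \cup E(C)$.

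Next I would argue that these contributions are disjoint, so that $|R| = \sum_{i=1}^k c_i$ holds as a genuine cardinality rather than a count with multiplicity. The key point is that contracting $C$ identifies all of its vertices into one supervertex, which turns each of the $c_i$ cycle edges into a self-loop at that supervertex; such loops can never appear on a cycle selected in a later iteration, so no edge is added to $R$ twice.

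Then I would control the vertex count. Contracting a cycle on $c_i$ vertices replaces those $c_i$ vertices by a single supervertex and hence decreases $|V|$ by exactly $c_i - 1$. Since the algorithm begins with $n$ vertices and the \textbf{while} loop runs precisely until $|V| = 1$, the total decrease over all iterations equals $n - 1$, giving $\sum_{i=1}^k (c_i - 1) = n - 1$, that is, $\sum_{i=1}^k c_i = n + k - 1$. Combining this with the edge count from the previous step yields $|R| = \sum_{i=1}^k c_i = n + k - 1$, as claimed.

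The main obstacle here is not a deep argument but careful bookkeeping. I must be sure the identity ``cycle length $=$ number of edges $=$ number of vertices'' is applied to the current (contracted) multigraph rather than the original graph, and I must confirm that the convention of keeping duplicate edges under contraction never causes a single edge to be charged to $R$ in two distinct iterations. Once these points are pinned down, the claim follows by telescoping the per-iteration vertex reductions, so the proof reduces to a single summation.
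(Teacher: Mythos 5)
Your proof is correct and follows essentially the same route as the paper's: both arguments amount to observing that a contracted cycle on $s$ vertices adds $s$ edges while reducing the vertex count by $s-1$, and telescoping the vertex reductions to $n-1$; the paper merely groups cycles by size ($\sum_{i}(i-1)C_i = n-1$) where you index by iteration. Your extra remark that contracted cycle edges can never be re-selected (so $|R|$ is a true cardinality) is a point the paper leaves implicit, but it is not a different method.
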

\begin{proof}
Let $C_i$ be the number of cycles of size $i$ contracted by our algorithm. Since each cycle of size $i$ reduces the number of vertices by $i-1$ and our final graph has a single vertex, we know $\sum^{n}_{i=2} (i-1)C_i=n-1$. Then our algorithm's output has cost $\sum^{n}_{i=2} iC_i = n+k-1$.
\end{proof}
Simple algebra can show $\frac{n+k-1}{max\{n,2k\}}< \frac{3}{2}$. Thus this algorithm has a 1.5-approximation ratio.

\section{A 1.5-Approximation for {\sc DPA}}
\label{sec:1.5DPA}

To apply our dual-based methodology to {\sc DPA}, we need to formulate it as a cut-based integer linear program. Since {\sc DPA} has two types of arcs (high and low power), the corresponding program has to distinguish between these. The program corresponding to {\sc SSC} avoids having different types of arcs and thus has a simpler form.
Then for ease of notation, we will give our dual-based algorithm for {\sc DPA} by approximating an instance of {\sc SSC} with a bidirected input digraph. In Lemma~\ref{lem:DPATOSSC}, we claimed these problems are equivalent and problem instances can be easily transformed between the two. We now prove this result.

\begin{proof}\emph{of Lemma~\ref{lem:DPATOSSC}. }
We give a procedures that will turn any instance of {\sc DPA} on digraph $H$ into an instance of {\sc SSC} with a bidirected input digraph, $(G,\mathcal{C})$, and the reverse direction. Our transformations have linear runtime and will not substantially increase the size of the problem instance. Then our equivalence will follow.

We first consider transforming an instance of {\sc DPA} into {\sc SSC}. Let $H_0$ be the digraph induced by assigning no vertices of $H$ high power. Then $H_0$ will only have zero cost arcs. Since instances of {\sc DPA} are bidirected, no arcs cross between the strongly connected components of $H_0$.
We then construct an instance of {\sc SSC} with a vertex for each strongly connected component of $H_0$. For each vertex $v$ in $H_0$, we add a star with source at the strong component of $v$ and arcs going to each other strong component that $v$ has an arc to in $H$. Note that $G$ is bidirected.
Feasible solutions to these {\sc DPA} and {\sc SSC} instances can be converted between the two while preserving objective as follows:
Given a feasible solution to {\sc DPA}, for each vertex assigned high power add the corresponding star to our {\sc SSC} solution. Similarly, given a feasible solution to {\sc SSC}, we assign each vertex to high power when the corresponding star is in our {\sc SSC} solution. This will produce a feasible {\sc DPA} instance. Note these conversions will have equal objective since there is a one-to-one mapping between high power vertices and stars.

Now we give a transformation for the reverse direction from an instance of {\sc SSC} with a bidirected input digraph, $(G,\mathcal{C})$. Our instance of {\sc DPA} will have a vertex $v_F$ for every star $F\in \mathcal{C}$.
For each vertex $v$ of $G$, consider the set of stars sourced at $v$, $\{F |source(F)=1\}$. Add zero-cost arcs forming a cycle over this set.
For every pair of arcs, $uv$ and $vu$, in our bidirected $G$ and for every star $F$ with $uv\in F$ and star $F'$ with $vu\in F'$, we add a one-cost arc between to $v_F$ and $v_{F'}$. The resulting digraph will be bidirected, as is required for {\sc DPA}. As with our previous transformation, there is a one-to-one relationship between high power vertices and stars. This relationship immediately gives a conversion between our feasible solutions that will maintain objective.
 \end{proof}

We now proceed to construct a cut-based program for {\sc SSC} and then give all the relevant definitions for our algorithms. Our approximations for both {\sc DPA} and {\sc SSC} will utilize these definitions.
For the remainder of our definitions, we consider an instance of {\sc SSC} on a digraph $G=(V,E)$ and a set of stars $\mathcal{C}$.

\begin{definition}
For any star $F\in \mathcal{C}$, we define $source(F)$ to be the common source vertex of all arcs in $F$. We define $sinks(F)$ to be the set of endpoints of arcs in $F$.
\end{definition}

\begin{definition}
For a cut, $\emptyset \subset S \subset V$, we define $\partial \mathcal{C}(S)$ to be the set of all $F\in \mathcal{C}$ such that $source(F)\in S$ and at least one element of $sinks(F)$ is in $V\setminus S$.
\end{definition}

This notation allows us to use $\partial \mathcal{C}(S)$ as the set of all stars with an arc crossing from $S$ to $V\setminus S$. Using these definitions, we can create a cut-based linear programming relaxation for {\sc SSC} similar to those proposed in~\cite{Polzin2003} and~\cite{Calinescu2012}. 

\begin{alignat*}{2}
	\text{{\sc SSC Primal LP}} \\
	\text{minimize }   & \sum_{F\in \mathcal{C}} x_F \\
    \text{subject to } & \sum_{F\in \partial \mathcal{C}(S)}x_F\geq 1 \ & , \ & \bigforall \emptyset\subset S\subset V\\
                       & x_F\geq 0 \ & , \ & \bigforall F\in \mathcal{C}
\end{alignat*}

\begin{lemma} \label{lem:SSC-LPcorrectness}
When {\sc SSC Primal LP} is restricted to $x_F\in \mathbb{Z}$, it is exactly {\sc SSC}.
\end{lemma}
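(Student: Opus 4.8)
The plan is to reduce the statement to the standard cut characterization of strong connectivity, transported from arcs to stars. The key structural fact I would establish first is: a digraph on vertex set $V$ is strongly connected if and only if for every cut $\emptyset\subset S\subset V$ there is at least one arc directed from $S$ to $V\setminus S$. The forward direction follows by taking any $u\in S$ and $v\in V\setminus S$ and observing that a directed $u$-to-$v$ path must cross the cut. For the converse I would argue by contrapositive: if the digraph is not strongly connected, pick a vertex $u$ from which some vertex is unreachable, let $S$ be the set of vertices reachable from $u$, and note that $\emptyset\subset S\subset V$ has no arc leaving it, contradicting the cut condition.

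Next I would set up the correspondence between $0/1$ assignments $x$ and subsets $R\subseteq\mathcal{C}$ via $R=\{F : x_F=1\}$. Under this correspondence the objective $\sum_{F\in\mathcal{C}}x_F$ equals $|R|$, and for each cut $S$ the constraint $\sum_{F\in\partial\mathcal{C}(S)}x_F\ge 1$ holds exactly when $R$ contains some star $F$ with $source(F)\in S$ and at least one element of $sinks(F)$ in $V\setminus S$. By the definition of $\partial\mathcal{C}(S)$, this says precisely that $\bigcup_{F\in R}F$ contains an arc from $S$ to $V\setminus S$. Combining this with the cut characterization above, the entire family of cut constraints is satisfied if and only if $(V,\bigcup_{F\in R}F)$ is strongly connected, i.e.\ $R$ is feasible for {\sc SSC}.

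The one point requiring care is that the integer program allows $x_F\in\mathbb{Z}$ with $x_F\ge 0$ rather than only $x_F\in\{0,1\}$. I would handle this by observing that every cut constraint requires its sum to be at least $1$, so replacing any $x_F\ge 2$ by $x_F=1$ preserves feasibility while strictly decreasing the objective; hence every optimal integer solution may be taken to be $0/1$ without loss of generality. (That a feasible solution exists at all follows since $\bigcup_{F\in\mathcal{C}}F=E$ and $G$ is strongly connected, so $R=\mathcal{C}$ is feasible.) With this reduction the feasible $0/1$ points and their objectives coincide exactly with {\sc SSC} solutions and their cardinalities, so the restricted program and {\sc SSC} share the same optimal solutions and optimal value.

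I expect the main, though still routine, obstacle to be the converse direction of the cut characterization, namely translating ``every cut is crossed'' into global strong connectivity, since it is the only place where a genuine argument rather than a definitional unfolding is needed; the clamping of $x_F\ge 2$ down to $x_F=1$ and the matching of objectives are then immediate.
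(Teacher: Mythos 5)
Your proposal is correct and follows essentially the same route as the paper's proof: the paper also establishes the two directions via the path-crossing argument (strong connectivity implies every cut constraint holds) and the reachable-set contradiction (a feasible $x$ whose induced digraph were not strongly connected would leave the cut $V_s$ of vertices reachable from $s$ uncrossed), and it uses the identical clamping of $x_F\ge 2$ down to $x_F=1$ to reduce to $0/1$ solutions. The only difference is organizational—you factor out the cut characterization of strong connectivity as a standalone fact and phrase the result as a correspondence of feasible solutions, while the paper proves the two inequalities $|R_{OPT}|\ge\sum_F x^*_F$ and $|R_{OPT}|\le\sum_F x^*_F$ directly—but the mathematical content is the same.
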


We defer the proof of Lemma~\ref{lem:SSC-LPcorrectness} to the appendix. Intuitively, the dual of {\sc SSC} is to find the maximum set of cuts, such that no star $F\in \mathcal{C}$ crosses multiple of our cuts. Properly, we can consider fractional cuts in our dual problem, but our algorithm only uses integer solutions to the dual problem. 

\begin{alignat*}{2}
	\text{{\sc SSC Dual LP}} \\
	\text{maximize }   & \sum_{\emptyset\subset S\subset V} y_S \\
    \text{subject to } & \sum_{F\in \partial \mathcal{C}(S)}y_S\leq 1 \ & , \ & \bigforall F\in \mathcal{C}\\
                       & y_S\geq 0 \ & , \ & \bigforall \emptyset\subset S\subset V
\end{alignat*}


\subsection{Definitions}
\label{sec:1.5DPA-Defintitons}

In our approximation of {\sc 2ECS}, we repeatedly found cycles in the graph and contracted them. A cycle of length $k$, adds $k$ to the cost of the solution and reduces the number of vertices by $k-1$. A similar method has been applied to {\sc MSCS} in many previous works.
In~\cite{Calinescu2014}, C{\u{a}}linescu proposed a novel way to extend this approach to {\sc DPA}. Following from those definitions, we will use the following two definitions to define a contractible structure in {\sc SSC}.

\begin{definition}
A set $Q\subseteq \mathcal{C}$ is \emph{quasiperfect} if and only if all $F\in Q$ have a distinct $source(F)$ and the subgraph with vertex set the sources of the stars of $Q$ and arc set $\bigcup_{F\in Q}F$ is strongly connected. (Here we abuse notation as $\bigcup_{F\in Q}F$ may contain arcs with endpoints outside of the source vertices of $Q$. Such arcs are ignored.)
\end{definition}

We will use $source(Q)$ for a quasiperfect $Q$ to be the set of all source vertices in $Q$. The distinction between $source$ defined on $F\in \mathcal{C}$ and $source$ defined on quasiperfect sets will always be clear from context.

\begin{definition}
A set $Q\subseteq \mathcal{C}$ is \emph{perfect} if and only if $Q$ is quasiperfect and all $F\in Q$ have $sinks(F)\subseteq source(Q)$.
\end{definition}

\begin{figure}[t]
\centering
\def\svgwidth{0.85\textwidth} 
\begingroup%
  \makeatletter%
  \providecommand\color[2][]{%
    \errmessage{(Inkscape) Color is used for the text in Inkscape, but the package 'color.sty' is not loaded}%
    \renewcommand\color[2][]{}%
  }%
  \providecommand\transparent[1]{%
    \errmessage{(Inkscape) Transparency is used (non-zero) for the text in Inkscape, but the package 'transparent.sty' is not loaded}%
    \renewcommand\transparent[1]{}%
  }%
  \providecommand\rotatebox[2]{#2}%
  \ifx\svgwidth\undefined%
    \setlength{\unitlength}{1612.327632bp}%
    \ifx\svgscale\undefined%
      \relax%
    \else%
      \setlength{\unitlength}{\unitlength * \real{\svgscale}}%
    \fi%
  \else%
    \setlength{\unitlength}{\svgwidth}%
  \fi%
  \global\let\svgwidth\undefined%
  \global\let\svgscale\undefined%
  \makeatother%
  \begin{picture}(1,0.32678355)%
    \put(0,0){\includegraphics[width=\unitlength]{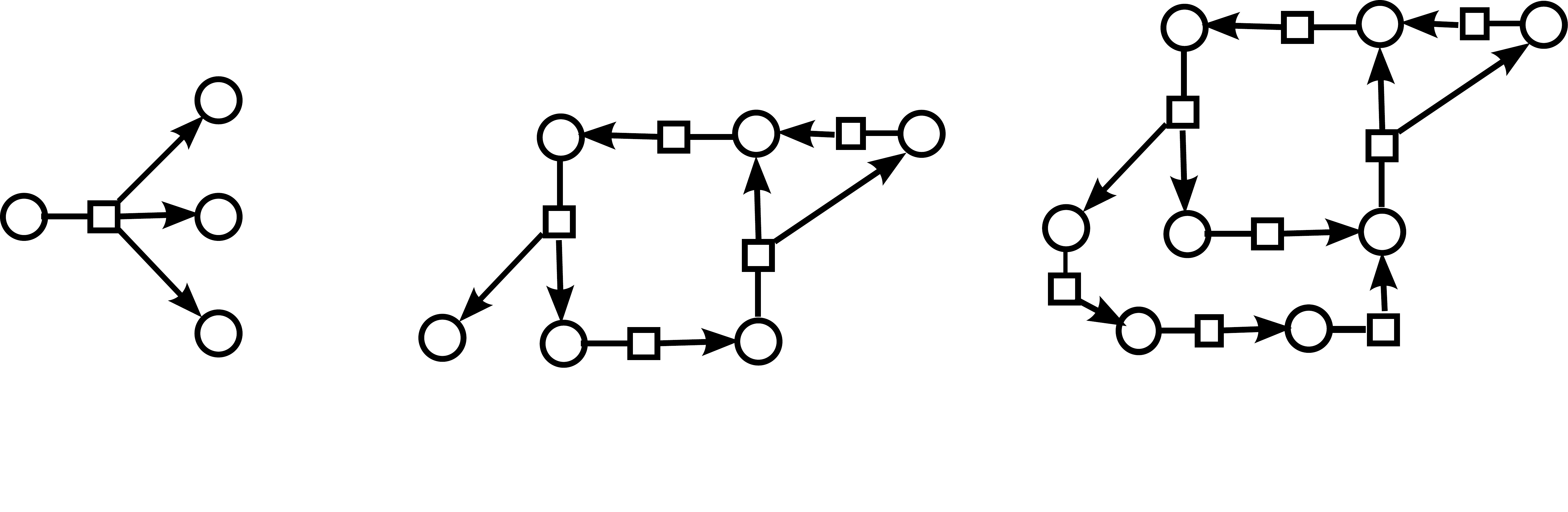}}%
    \put(0.05879989,0.05309538){\color[rgb]{0,0,0}\makebox(0,0)[lb]{\smash{(a)}}}%
    \put(0.4080897,0.05309538){\color[rgb]{0,0,0}\makebox(0,0)[lb]{\smash{(b)}}}%
    \put(0.2487648,0.13184133){\color[rgb]{0,0,0}\makebox(0,0)[lb]{\smash{$u$}}}%
    \put(0.80450548,0.05309538){\color[rgb]{0,0,0}\makebox(0,0)[lb]{\smash{(c)}}}%
  \end{picture}%
\endgroup%
\caption{Examples of both quasiperfect and perfect sets. A star is denoted by a square with a set of arcs leaving it. The vertex connected to the square by a regular line segment is the source. (a) A star made of three arcs. (b) A quasiperfect set of size five. If the arc to vertex $u$ did not exist, this would be a perfect set. (c) A perfect set of size eight containing the previous example. }
\label{fig:StarExamples}
\end{figure}

We define contracting a perfect set as follows: replace all the source vertices of the perfect set with a single supervertex whose arc set is all arcs with exactly one end point in our perfect set.  We can combine duplicate arcs into a single arc during this contraction process. As a result of contraction, the size of a star may decrease, and a star may be removed if it has no remaining arcs.
A quasiperfect $Q$ adds $|Q|$ cost and contracts the $|Q|$ vertices of $source(Q)$ into one, but may have extra arcs leaving the new supervertex. A perfect set has no such arcs, so the problem after contracting such a set will be another instance of {\sc SSC}. Our next lemma describes how to expand any quasiperfect set into a perfect set. This an extension of Lemma 2 given by Calinescu in~\cite{Calinescu2014}.

\begin{lemma} \label{lem:augment}
Every quasiperfect set is a subset of some perfect set.
\end{lemma}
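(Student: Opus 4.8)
The plan is to grow the source set of the quasiperfect set one ``return path'' at a time, repeatedly absorbing a stray sink into the set of sources while preserving quasiperfection, until no sinks lie outside the source set. I would phrase this either as an iterative augmentation with the monotone progress measure $|V \setminus source(Q)|$, or equivalently as an induction on that quantity. The engine of the argument is the global strong connectivity of the input digraph $G$, which is what guarantees that any vertex I newly promote to being a source can still reach the existing strongly connected core.

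First I would record two facts. Since $G$ is strongly connected and $\bigcup_{F \in \mathcal{C}} F = E$, every vertex has an outgoing arc and is therefore the source of at least one star in $\mathcal{C}$, so I can ``activate'' any vertex I like by adjoining a star rooted there. Second, if $source(Q) = V$ then $Q$ is already perfect, since every sink automatically lies in $V = source(Q)$; this is the base case. For the step, suppose $Q$ is quasiperfect but not perfect and write $S = source(Q)$. By the definition of perfect there is a star $F \in Q$ with a sink $w \in sinks(F) \setminus S$. Because $G$ is strongly connected, there is a path from $w$ back into $S$; taking a shortest such path $w = w_0, w_1, \ldots, w_m$ with $w_m \in S$ guarantees the interior vertices $w_0, \ldots, w_{m-1}$ are distinct and lie outside $S$. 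For each $j < m$ I choose a star $F_{w_j} \in \mathcal{C}$ sourced at $w_j$ that contains the arc $w_j w_{j+1}$, and I set $Q' = Q \cup \{F_{w_0}, \ldots, F_{w_{m-1}}\}$, so that $source(Q') = S \cup \{w_0, \ldots, w_{m-1}\}$.

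Next I would verify that $Q'$ is quasiperfect. Distinctness of sources is immediate, since the $w_j$ are distinct from one another and disjoint from $S$ (and the stars of $Q$ already had distinct sources in $S$). For strong connectivity on $source(Q')$ using only internal arcs, I would note that $S$ reaches every new vertex: the arc $source(F) \to w_0$, which formerly left $S$, is now internal, and the path arcs $w_0 \to w_1 \to \cdots$ carry reachability forward. Conversely every new vertex reaches $S$ by following the tail of the same path to $w_m \in S$. Combining these with the strong connectivity of $Q$ on $S$ shows every pair of vertices of $source(Q')$ is mutually reachable. Since $|source(Q')| > |source(Q)|$ and $|source(Q')| \le |V|$, the process terminates (or the induction hypothesis applies to $Q'$), and the terminal set is a perfect set containing $Q$.

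I expect the only genuinely delicate point to be preserving strong connectivity, and specifically showing that the freshly promoted vertices can reach \emph{back} into the existing core. This is exactly where the hypothesis that $G$ is strongly connected is indispensable, and it is the reason I add an entire shortest return path rather than a single star: a lone new source $w$ need not have any sink inside the current source set, so one star alone may fail to restore reachability into $S$, whereas appending the whole path to $S$ makes the return argument go through.
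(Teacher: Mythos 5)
Your proof is correct and follows essentially the same route as the paper: the paper's proof is precisely this iterative augmentation, repeatedly finding a stray sink $u \in sinks(F)\setminus source(Q)$, taking a path from $u$ back into $source(Q)$ that is internally disjoint from the current sources, and adjoining a star containing each path arc, with termination guaranteed by the growth of $Q$. Your write-up is a bit more explicit about why quasiperfection is preserved (the formerly ignored arc $source(F)\to w_0$ becoming internal, and the path carrying reachability back into $S$), but the underlying argument is identical.
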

\begin{proof}
Consider the following expansion procedure for any quasiperfect set $Q$.

\begin{algorithmic}[1]
  \WHILE{$\exists F\in Q$ with $u\in sinks(F)\setminus source(Q)$}
    \STATE Find a path $P$ from $u$ to $source(Q)$ that is internally disjoint from $Q$
    \STATE {\bf for each} arc $e$ in $P$ {\bf do} add a star containing $e$ to $Q$ {\bf end for}
  \ENDWHILE
\end{algorithmic}

Line 2 of this construction can be implemented using a simple depth first search. Any star added must have had source outside of $source(Q)$. So no star added will share a source vertex with any other star in $Q$. Further $Q$ will still have a strongly connected subgraph.
Thus each iteration of this procedure maintains the invariant that $Q$ is quasiperfect. When this construction terminates, no such $F$ exists. Therefore the resulting set $Q$ is perfect.
Each iteration also increases the size of $Q$, so it will terminate eventually.
 \end{proof}

Our expansion procedure can be simplified slightly for {\sc DPA}. Since the digraph is bidirected, lines 2 and 3 can just choose any star containing the reverse arc from $u$ to $source(F)$.
For the special case of {\sc MSCS}, all stars have size exactly one. It follows that all quasiperfect sets will be perfect. In fact, for {\sc MSCS}, it can easily be shown that all quasiperfect and perfect sets are cycles.

Our approximation algorithms will repeatedly find perfect sets and contract them. As in our {\sc 2ECS} example, the dual problem requires us to build a set of cuts that share no crossing stars.  We use the following two definitions to relate the dual to perfect sets.

\begin{definition}
Two cuts $S_1$ and $S_2$ are \emph{star-disjoint} if and only if $\partial \mathcal{C}(S_1)$ and $\partial \mathcal{C}(S_2)$ are disjoint (i.e.\ $\partial \mathcal{C}(S_1) \cap \partial \mathcal{C}(S_2)=\emptyset$).
\end{definition}

\begin{definition}
A cut $S$ is \emph{internal} to a set $Q\subseteq \mathcal{C}$ if and only if every $F \in \partial \mathcal{C}(S)$ has $source(F)\in source(Q)$ and $sinks(F)\subseteq source(Q)$.
\end{definition}
Then contracting a perfect set with internal cut $S$ will remove all stars in $\partial \mathcal{C}(S)$ from the digraph. Then $S$ must be star-disjoint from all cuts in the resulting digraph.

\subsection{The 1.5-Approximation Algorithm}
\label{sec:1.5DPA-Algorithm}

Now we will give our approximation algorithm for {\sc DPA} by considering any instance of {\sc SSC} with a bidirected input digraph. In Lemma~\ref{lem:DPA-Construction}, we give a construction for a perfect set with two star-disjoint internal cuts.
Utilizing this lemma, our approximation algorithm becomes very simple.
Our algorithm will repeatedly apply this construction and contract the resulting perfect set. This procedure is formally given in Algorithm~\ref{alg:1.5DPA}.

\begin{algorithm}
\caption{Dual-Based Approximation for {\sc Dual Power Assignment}} \label{alg:1.5DPA}
\begin{algorithmic}[1]
  \STATE $R=\emptyset$
  \WHILE{$|V|\neq 1$}
    \STATE Find a perfect set $Q$ with two star-disjoint internal cuts as shown in Lemma~\ref{lem:DPA-Construction}
	\STATE Contract the sources of $Q$ into a single vertex
    \STATE $R:=R\cup Q$
  \ENDWHILE
\end{algorithmic}
\end{algorithm}

\begin{lemma}\label{lem:DPA-Construction}
Every instance of {\sc SSC} with a bidirected input digraph has a perfect set with two star-disjoint internal cuts.
\end{lemma}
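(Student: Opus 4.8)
My plan is to exhibit, for any bidirected instance, a perfect set $Q$ together with a single vertex $v\in source(Q)$ so thoroughly enclosed by $source(Q)$ that both the singleton cut $\{v\}$ and its complement $V\setminus\{v\}$ are internal to $Q$. The reason I would reach for a complementary pair is that star-disjointness then comes for free from bidirectedness: a star lies in $\partial\mathcal{C}(\{v\})$ only if it is sourced at $v$, whereas a star lies in $\partial\mathcal{C}(V\setminus\{v\})$ only if it is sourced at some other vertex (necessarily a neighbour of $v$) and has $v$ among its sinks. Since every star has a unique source, these two collections are automatically disjoint, so $\{v\}$ and $V\setminus\{v\}$ are star-disjoint the instant they are both internal. (A parallel ``two deep endpoints'' pair $\{v_1\},\{v_2\}$ with $v_1\neq v_2$ is also star-disjoint for the same reason, and I would keep it as a fallback, but the complementary pair is the more robust target.) This reduces the entire lemma to a containment requirement on $source(Q)$.

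Unwinding the definition of \emph{internal}, I would aim for a perfect set $Q$ with source set $S:=source(Q)$ and a vertex $v\in S$ for which: (i) every star sourced at $v$ has all its sinks in $S$, and (ii) every star having $v$ among its sinks has all its sinks in $S$. Using bidirectedness to identify the out- and in-neighbours of $v$ as one undirected neighbourhood $N(v)$, condition (i) is exactly $N(v)\subseteq S$, while (ii) asks that the stars entering $v$ (all sourced inside $N(v)$) are themselves fully contained in $S$. Both conditions are monotone in $S$, so by Lemma~\ref{lem:augment} it suffices to produce a \emph{quasiperfect} set whose source set already contains $v$ together with the local closure of stars incident to $v$; expanding it to a perfect set only enlarges $S$ and therefore preserves (i) and (ii).

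To locate a usable $v$, I would reuse the maximal-path construction of Lemma~\ref{lem:2ECS-Construction}, run in the undirected graph underlying the bidirected instance: greedily grow a path until its last vertex $v$ has no neighbour off the path, which forces $N(v)\subseteq V(P)$ and makes $v$ an extreme vertex with no external neighbours to obstruct (i). Read as directed arcs and closed into a cycle exactly as in the {\sc 2ECS} argument, the path supplies the strongly connected skeleton of the desired quasiperfect set, and the expansion of Lemma~\ref{lem:augment} then pulls the remaining sinks of the stars incident to $v$ inside $S$ to secure (ii).

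The main obstacle I anticipate is precisely the feature distinguishing this lemma (ratio $1.5$) from the general {\sc SSC} bound (ratio $1.6$): because a star bundles several arcs under one source and a quasiperfect set may use only one star per source vertex, a connected set of vertices need not be realizable as the source set of any perfect set. Concretely, if the arcs from a vertex to two different neighbours sit in two different stars, those neighbours can be forced never to co-occur in a single quasiperfect set, so I cannot simply declare any convenient $V(P)$ to be $source(Q)$. The crux of the proof is therefore to choose $v$ extreme enough (an endpoint of the maximal path, hence with as few external neighbours as possible) that the local closure required by (i) and (ii) is small enough to admit a consistent strongly connected star selection, and then to check that the expansion never has to re-commit a source already spent on a forward arc. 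Discharging this realizability check is where the bidirected hypothesis does its real work, and it is the step I expect to demand the most care.
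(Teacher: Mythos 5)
Your reduction of star-disjointness to a complementary pair of cuts $\{v\}$, $V\setminus\{v\}$ is correct as far as it goes (in fact it needs only uniqueness of star sources, not bidirectedness), and you correctly flag the one-star-per-source realizability issue as the crux. But the plan fails at exactly that crux, and in two ways. First, the mechanism you offer for your condition (ii) is wrong: the expansion of Lemma~\ref{lem:augment} only chases sinks of stars that are \emph{members} of $Q$, whereas internality of $V\setminus\{v\}$ quantifies over \emph{all} stars of $\mathcal{C}$ having $v$ as a sink. Such a star is sourced at some $u\in N(v)$ whose single slot in $Q$ is already occupied by the star covering $u$'s cycle arc, so it generally cannot be added to $Q$, and nothing in your construction forces its other sinks into $source(Q)$. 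Second, and more fundamentally, the target itself is unachievable: there are bidirected instances in which \emph{no} perfect set admits a complementary pair at \emph{any} vertex. Take $V=\{u,p,z_1,z_2\}$ with bidirected edges $up$, $uz_1$, $uz_2$, stars sourced at $u$ being $\{up,uz_1\}$, $\{up,uz_2\}$, $\{uz_1,uz_2\}$, and singleton stars $\{pu\}$, $\{z_1u\}$, $\{z_2u\}$. For every choice of center $x$, the closure forced by your conditions (i) and (ii) is all of $V$ (each leaf is a co-sink, with another leaf, of some star entering $x$), but no perfect set spans $V$: the one star $u$ may contribute reaches only two of its three leaves, leaving the third with no entering arc in $\bigcup_{F\in Q}F$, so strong connectivity fails. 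Hence any proof built solely around complementary pairs cannot succeed; your parenthetical fallback of two singleton cuts $\{v_1\},\{v_2\}$ is not a side remark but indispensable, and you give no construction for it.

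The paper's proof is shaped precisely by this obstruction. It uses complementary cuts $\{l\}$, $V\setminus\{l\}$ only when centered at a \emph{leaf} $l$, and only after a preprocessing step that disposes of stars with two leaf sinks (that step handles my counterexample above, producing the non-complementary pair $\{p\},\{z_1\}$ from the perfect set $\{\{up,uz_1\},\{pu\},\{z_1u\}\}$). In its main case it abandons complementary pairs altogether: the leaf-avoiding path construction with its reversal pivot is engineered to output a cycle with \emph{two} distinguished non-leaf vertices $\bar v\neq\bar w$ whose non-leaf neighbors all lie on the cycle, and the dual solution uses the two vertex-disjoint cuts $\{\bar v\}\cup L_{\bar v}$ and $\{\bar w\}\cup L_{\bar w}$. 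The point is that internality of these cuts constrains only stars \emph{sourced at} $\bar v$, $\bar w$, or at their leaves --- your condition (i), which maximal-path endpoints do deliver --- and never stars \emph{entering} those vertices, which is the condition your construction cannot control. To repair your proof you would have to reproduce something like this case split (multi-leaf stars, leaf-centered complementary pairs, and a two-endpoint main case), at which point you have essentially rederived the paper's argument.
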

\begin{proof}
We consider an instance of {\sc SSC} defined on a bidirected digraph $G=(V,E)$. In the degenerate case, we have a digraph with only two vertices, $u$ and $v$. Then the perfect set $\{\{uv\},\{vu\}\}$ will have star-disjoint internal cuts $\{u\}$ and $\{v\}$.

Now we assume $|V|\geq 3$. 
Again, we let $N(v)$ denote the neighbors of a vertex $v$. Note that the set of in-neighbors and out-neighbors for a vertex are identical since the digraph is bidirected. We call any vertex with exactly one neighbor a \emph{leaf}. Then our assumption that $|V|\geq 3$ implies some non-leaf vertex exists.
Consider the following cycle construction in $G$ (Figure~\ref{fig:DPA-Cycle-Example} shows its possible outputs).
\begin{algorithmic}[1]
  \STATE Set $P$ to be any arc $r\bar v\in E$, where $\bar v$ is not a leaf 
  \WHILE{TRUE}
    \IF{$\exists u\in V \text{ s.t. } u\in N(\bar v) \setminus V(P)$ and $u$ is not a leaf}
      \STATE $P:=P$ concatenated with the arc $\bar vu$
      \STATE $\bar v:=u$
    \ELSE
      \STATE Set $w$ to be the vertex in $N(\bar v)\cap V(P)$ earliest in $P$ 
      \STATE Set $\bar w$ to be the successor of $w$ in $P$
      \IF{$\exists u\in V \text{ s.t. } u\in N(\bar w) \setminus V(P)$ and $u$ is not a leaf}
        \STATE Replace $P$ with the path using $w\bar v$ instead of $w\bar w$, reversing all arcs between $\bar v$ and $\bar w$
        \STATE $P:=P$ concatenated with the arc $\bar wu$
        \STATE $\bar v:=u$
      \ELSE
        \STATE Set $x$ to be the vertex in $N(\bar w)\cap V(P)$ earliest in $P$ 
        \STATE Set $C$ to the cycle using the arc $\bar vw$, the reverse of arcs in $P$ from $w$ to $x$, the arc $x\bar w$, and arcs in $P$ from $\bar w$ to $\bar v$
        \RETURN $C$, $\bar v$, $\bar w$
      \ENDIF
    \ENDIF    
  \ENDWHILE
\end{algorithmic}

\begin{figure}[t]
\centering
\def\svgwidth{0.65\textwidth} 
\begingroup%
  \makeatletter%
  \providecommand\color[2][]{%
    \errmessage{(Inkscape) Color is used for the text in Inkscape, but the package 'color.sty' is not loaded}%
    \renewcommand\color[2][]{}%
  }%
  \providecommand\transparent[1]{%
    \errmessage{(Inkscape) Transparency is used (non-zero) for the text in Inkscape, but the package 'transparent.sty' is not loaded}%
    \renewcommand\transparent[1]{}%
  }%
  \providecommand\rotatebox[2]{#2}%
  \ifx\svgwidth\undefined%
    \setlength{\unitlength}{1287.68608398bp}%
    \ifx\svgscale\undefined%
      \relax%
    \else%
      \setlength{\unitlength}{\unitlength * \real{\svgscale}}%
    \fi%
  \else%
    \setlength{\unitlength}{\svgwidth}%
  \fi%
  \global\let\svgwidth\undefined%
  \global\let\svgscale\undefined%
  \makeatother%
  \begin{picture}(1,0.46974939)%
    \put(0,0){\includegraphics[width=\unitlength]{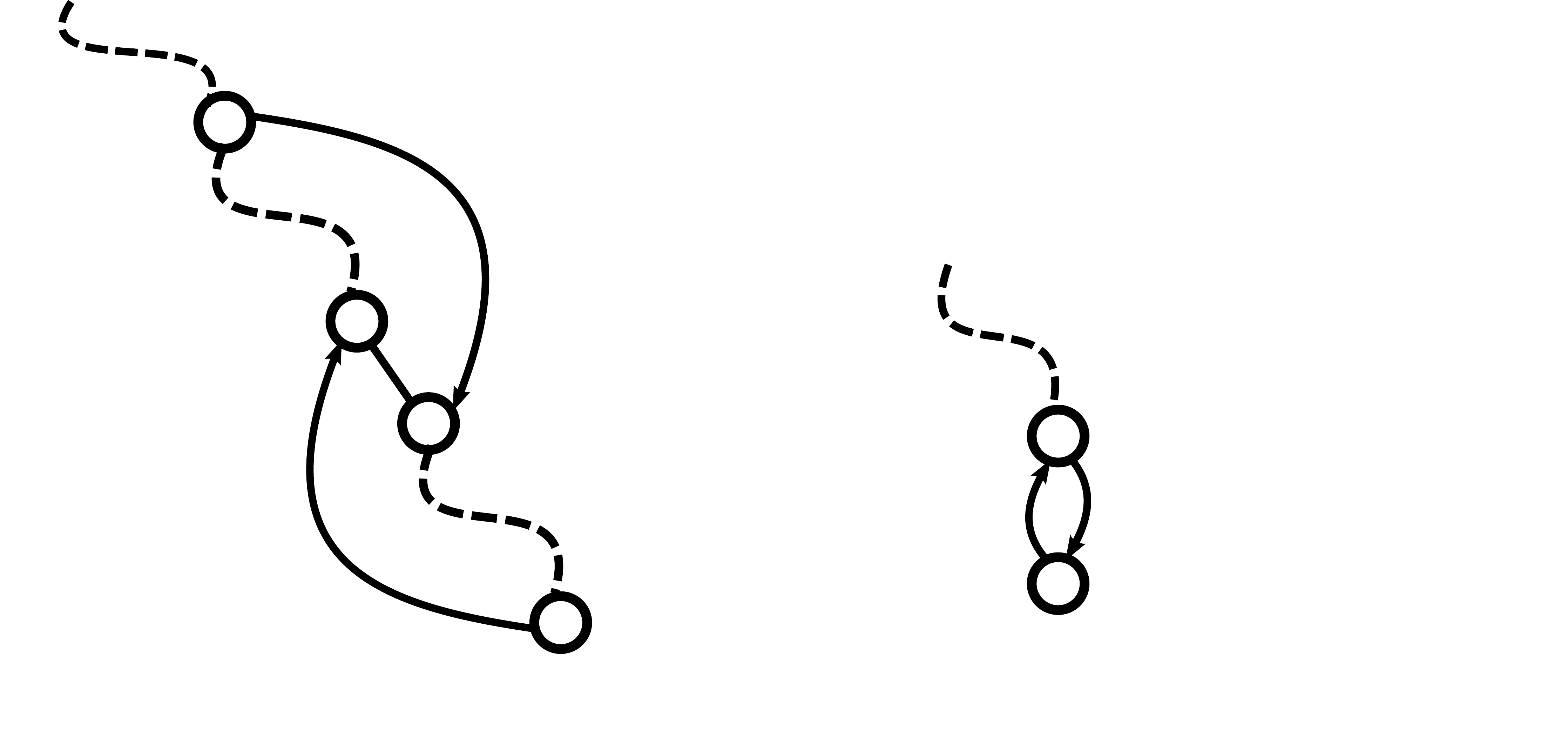}}%
    \put(0.15932342,0.41314417){\color[rgb]{0,0,0}\makebox(0,0)[lb]{\smash{$x$}}}%
    \put(0.23653832,0.28693773){\color[rgb]{0,0,0}\makebox(0,0)[lb]{\smash{$w$}}}%
    \put(0.26156658,0.22676334){\color[rgb]{0,0,0}\makebox(0,0)[lb]{\smash{$\bar w$}}}%
    \put(0.36877995,0.09185914){\color[rgb]{0,0,0}\makebox(0,0)[lb]{\smash{$\bar v$}}}%
    \put(0.69915149,0.0915319){\color[rgb]{0,0,0}\makebox(0,0)[lb]{\smash{$\bar v=\bar w$}}}%
    \put(0.70320042,0.191706){\color[rgb]{0,0,0}\makebox(0,0)[lb]{\smash{$x=w$}}}%
    \put(-0.00309421,0.43089473){\color[rgb]{0,0,0}\makebox(0,0)[lb]{\smash{$p$}}}%
    \put(0.54717303,0.26803337){\color[rgb]{0,0,0}\makebox(0,0)[lb]{\smash{$p$}}}%
    \put(0.23032561,0.00491434){\color[rgb]{0,0,0}\makebox(0,0)[lb]{\smash{(a)}}}%
    \put(0.64959373,0.00491434){\color[rgb]{0,0,0}\makebox(0,0)[lb]{\smash{(b)}}}%
  \end{picture}%
\endgroup%
\caption{Examples of cycles produced by our construction for Lemma~\ref{lem:DPA-Construction}. Dashed curves represent a path. We do not show the leaves that may exist next to $\bar v$ or $\bar w$. (a) Shows the general form of our cycle. (b) Shows the special case when $|C|=2$.}
\label{fig:DPA-Cycle-Example}
\end{figure}

Note that it is possible for $\bar v$ and $\bar w$ to be the same vertex. We make the following claim about the output of this procedure.
\begin{lemma}
For any instance of {\sc SSC} with a bidirected input digraph and $|V|\geq 3$, this construction will output a cycle $C$ with vertices $\bar v, \bar w\in V(C)$ having the following two properties:\vspace{-1.5mm}
\begin{enumerate}
\item $\bar v$ and $\bar w$ are not leaves.
\item Each neighbor of $\bar v$ or $\bar w$ is either in $V(C)$ or a leaf.
\end{enumerate}
\end{lemma}
\begin{proof}
Any strongly connected digraph with at least three vertices will have an initial arc $r\bar v$ where $\bar v$ is not a leaf. This guarantees that step 1 is possible.
Then each iteration increases the length of the path $P$. It follows that there are at most $|V|$ iterations before the construction terminates.

For our first property, we maintain the invariant that $\bar v$ is not a leaf. This is true from our initial choice of $\bar v$, and also maintained in each $u\in V$ chosen to extend $P$. Finally, $\bar w$ is either equal to $\bar v$ and thus not a leaf, or inside the path $P$ and thus has two neighbors.

For our second property, the choice of the cycle $C$ implies that $V(C)$ contains all vertices in $P$ between $x$ and $\bar v$. All non-leaf neighbors of $\bar w$ are at most as early as $x$ in $P$. All non-leaf neighbors of $\bar v$ are at most as early as $w$ in $P$. Note that $w$ is at most as early as $x$ in $p$. Then all the non-leaf neighbors of $\bar v$ and $\bar w$ must be in the V(C). 
 \end{proof}

Using this cycle $C$, we will construct our perfect set with two star-disjoint internal cuts. Note our resulting perfect set may not fully contain $C$.
Let $L_{\bar v}$ and $L_{\bar w}$ be the set of leaves adjacent to $\bar v$ and $\bar w$, respectively.
Consider the case where there is a star $F$ sourced at $\bar v$ containing arcs to multiple leaves. Let $l_1$ and $l_2$ be two distinct leaves in $sinks(F)$. Then we expand the quasiperfect set $\{F\}$ into a perfect set $Q$ using Lemma~\ref{lem:augment}. The resulting set is shown in Figure~\ref{fig:DPA-Construction_Examples} (a). This $Q$ will have internal cuts $\{l_1\}$ and $\{l_2\}$. Note that if two cuts share no vertices, then they also share no crossing stars (i.e.\ they are star-disjoint).
The same construction can be made for such a star sourced at $\bar w$.
For the remainder of our proof, we can assume no star exists sourced from $\bar v$ or $\bar w$ going to multiple leaves.
Now we consider two separate cases: $\bar v = \bar w$ and $\bar v \neq \bar w$.

\begin{description}
\item[Case 1: $\bar v = \bar w$.] In this case, $w$ must be the predecessor of $\bar v$. We can conclude that $|C|=2$. Further, $\bar v$ is only adjacent to $w$ and leaves.
We know that $\bar v$ is not a leaf. Therefore the set $L_{\bar v}$ must be non-empty. Let $l\in L_{\bar v}$ be a leaf of $\bar v$.

As previously shown, we can assume that no star sourced at $\bar v$ contains multiple of our leaves.  Then all stars containing the arc $\bar vl$ are either $\{\bar vl, \bar vw\}$ or $\{\bar vl\}$.
Suppose the star $F=\{\bar vl, \bar vw\}$ exists. Then we expand the quasiperfect set $\{F\}$ into a perfect set $Q$ using Lemma~\ref{lem:augment}. The resulting perfect set will have $\{l\}$ as an internal cuts since the only neighbor of $l$ is $\bar v$. Further, $V\setminus \{l\}$ is internal to this perfect set since $\{\bar vl, \bar vw\}$ and $\{\bar vl\}$ are the only stars containing this arc and all three vertices $w$, $\bar v$, and $l$ are inside the perfect set. These two internal cuts are star-disjoint since they have no common vertices.

If the star $\{\bar vl, \bar vw\}$ does not exist, then the star $\{\bar vl\}$ must exist. Therefore the set $\{\{\bar vl\},\{l\bar v\}\}$ is perfect and has internal cuts $\{l\}$ and $V\setminus \{l\}$. These two possibilities are shown in Figure~\ref{fig:DPA-Construction_Examples} (b) and (c).

\begin{figure}[t]
\centering
\def\svgwidth{0.9\textwidth} 
\begingroup%
  \makeatletter%
  \providecommand\color[2][]{%
    \errmessage{(Inkscape) Color is used for the text in Inkscape, but the package 'color.sty' is not loaded}%
    \renewcommand\color[2][]{}%
  }%
  \providecommand\transparent[1]{%
    \errmessage{(Inkscape) Transparency is used (non-zero) for the text in Inkscape, but the package 'transparent.sty' is not loaded}%
    \renewcommand\transparent[1]{}%
  }%
  \providecommand\rotatebox[2]{#2}%
  \ifx\svgwidth\undefined%
    \setlength{\unitlength}{1707.7542015bp}%
    \ifx\svgscale\undefined%
      \relax%
    \else%
      \setlength{\unitlength}{\unitlength * \real{\svgscale}}%
    \fi%
  \else%
    \setlength{\unitlength}{\svgwidth}%
  \fi%
  \global\let\svgwidth\undefined%
  \global\let\svgscale\undefined%
  \makeatother%
  \begin{picture}(1,0.64427582)%
    \put(0,0){\includegraphics[width=\unitlength]{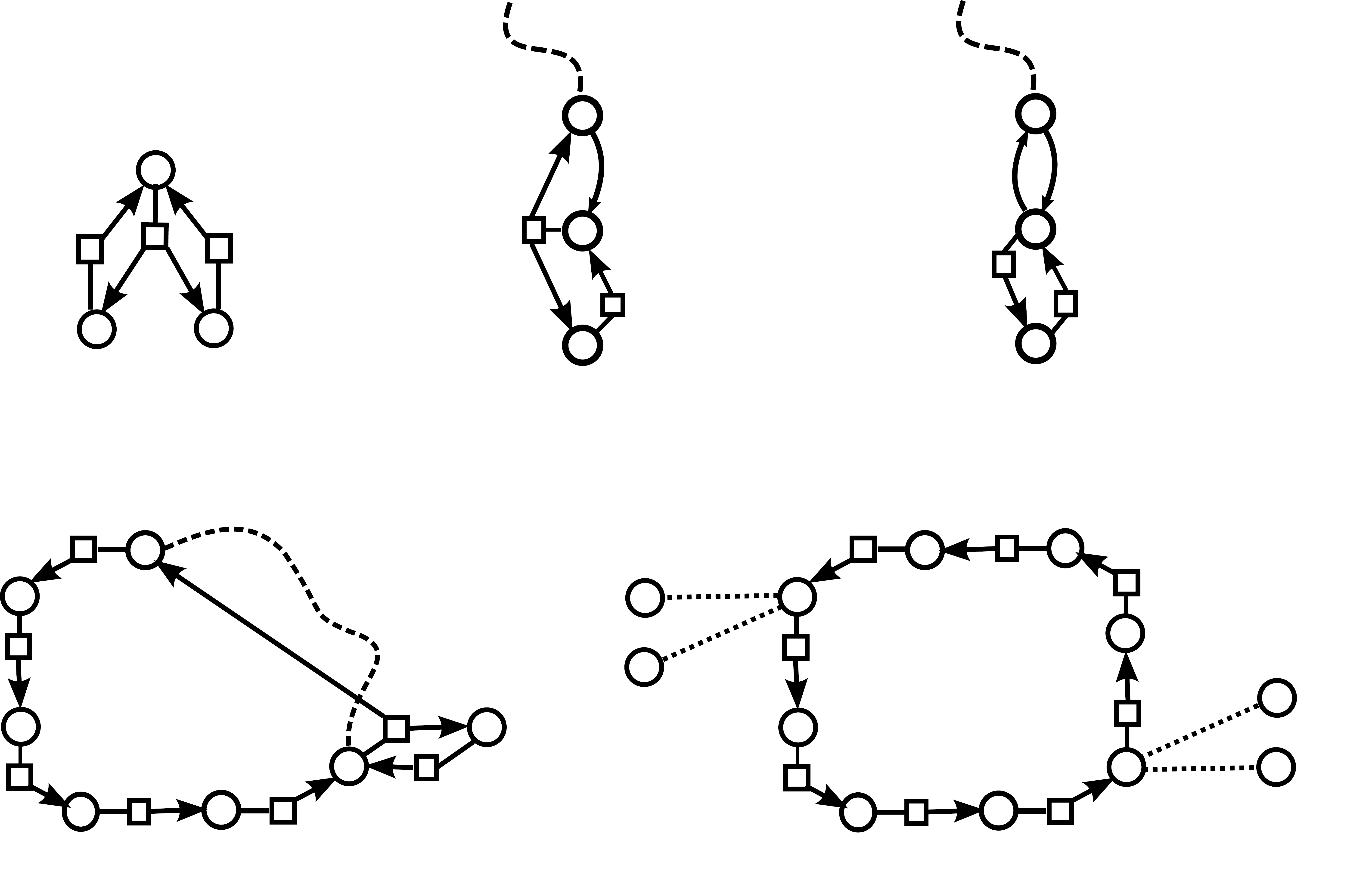}}%
    \put(0.10327497,0.54340373){\color[rgb]{0,0,0}\makebox(0,0)[lb]{\smash{$\bar v$}}}%
    \put(0.02310283,0.39403464){\color[rgb]{0,0,0}\makebox(0,0)[lb]{\smash{$l_1$}}}%
    \put(0.1770226,0.39390078){\color[rgb]{0,0,0}\makebox(0,0)[lb]{\smash{$l_2$}}}%
    \put(0.08995072,0.33655312){\color[rgb]{0,0,0}\makebox(0,0)[lb]{\smash{(a)}}}%
    \put(0.40473242,0.33655312){\color[rgb]{0,0,0}\makebox(0,0)[lb]{\smash{(b)}}}%
    \put(0.44282802,0.46765496){\color[rgb]{0,0,0}\makebox(0,0)[lb]{\smash{$\bar v=\bar w$}}}%
    \put(0.44588096,0.56005281){\color[rgb]{0,0,0}\makebox(0,0)[lb]{\smash{$x=w$}}}%
    \put(0.32823279,0.61760541){\color[rgb]{0,0,0}\makebox(0,0)[lb]{\smash{$p$}}}%
    \put(0.45117317,0.37921803){\color[rgb]{0,0,0}\makebox(0,0)[lb]{\smash{$l$}}}%
    \put(0.77315328,0.4689571){\color[rgb]{0,0,0}\makebox(0,0)[lb]{\smash{$\bar v=\bar w$}}}%
    \put(0.77620622,0.56135495){\color[rgb]{0,0,0}\makebox(0,0)[lb]{\smash{$x=w$}}}%
    \put(0.65855799,0.61890755){\color[rgb]{0,0,0}\makebox(0,0)[lb]{\smash{$p$}}}%
    \put(0.78243539,0.38052017){\color[rgb]{0,0,0}\makebox(0,0)[lb]{\smash{$l$}}}%
    \put(0.73387232,0.33655312){\color[rgb]{0,0,0}\makebox(0,0)[lb]{\smash{(c)}}}%
    \put(0.2460267,0.04188249){\color[rgb]{0,0,0}\makebox(0,0)[lb]{\smash{$\bar v$}}}%
    \put(0.09076848,0.26807768){\color[rgb]{0,0,0}\makebox(0,0)[lb]{\smash{$u$}}}%
    \put(0.37937955,0.10898714){\color[rgb]{0,0,0}\makebox(0,0)[lb]{\smash{$l$}}}%
    \put(0.12654592,0.00395256){\color[rgb]{0,0,0}\makebox(0,0)[lb]{\smash{(d)}}}%
    \put(0.80726504,0.04271086){\color[rgb]{0,0,0}\makebox(0,0)[lb]{\smash{$\bar v$}}}%
    \put(0.56596597,0.23142988){\color[rgb]{0,0,0}\makebox(0,0)[lb]{\smash{$\bar w$}}}%
    \put(0.68400317,0.00395256){\color[rgb]{0,0,0}\makebox(0,0)[lb]{\smash{(e)}}}%
  \end{picture}%
\endgroup%
\caption{Depicting the quasiperfect or perfect sets selected by our construction based on the cycle $C$ with vertices $\bar v$ and $\bar w\in V(C)$. We omit additional arcs that could be part of each star shown. If such arcs exist, they will be handled by the expansion shown in Lemma~\ref{lem:augment}. Note $\bar v$ and $\bar w$ are symmetric and each case shown applies to both.}
\label{fig:DPA-Construction_Examples}
\end{figure}

\item[Case 2: $\bar v \neq \bar w$.] First suppose a star sourced at $\bar v$ exists with an arc to a leaf $l\in L_{\bar v}$ and an arc to a cycle vertex $u\in V(C)$. Then let $F$ be the star of this form with $u$ nearest after $\bar v$ in $C$ (using an arbitrary direction around $C$). We construct a quasiperfect set by taking $F$ and a star containing each arc of the cycle from $u$ to $\bar v$. We can expand this quasiperfect set into a perfect set $Q$ using Lemma~\ref{lem:augment}. Then $\{l\}$ is an internal cut to $Q$. This perfect set is shown in Figure~\ref{fig:DPA-Construction_Examples} (d).
We know that any star $F$ sourced at $\bar v$ with an arc to $l$ cannot have an arc to any other leaf. Then our choice of $Q$ gives us that all stars $F$ crossing the cut $V\setminus \{l\}$ have $sinks(F) \subset source(Q)$. Therefore the cut $V\setminus \{l\}$ is also internal to $Q$.

Now, we can assume no star sourced at $\bar v$ exists with arcs into both $L_{\bar v}$ and $V(C)$. Then every star $F$ crossing the cut $\{\bar v\}\cup L_{\bar v}$ has $sinks(F)\subset V(C)$. The same claim holds for $\bar w$ by symmetry.
Then we construct a quasiperfect set by iterating over the arcs of $C$, selecting a star containing each arc. Let $Q$ be the perfect set made by expanding this quasiperfect set using Lemma~\ref{lem:augment}. This perfect set is shown in Figure~\ref{fig:DPA-Construction_Examples} (e).
We will have $\{\bar v\}\cup L_{\bar v}$ as an internal cut since $V(C) \subseteq source(Q)$. Similarly we also have the internal cut $\{\bar w\}\cup L_{\bar w}$. Since $\bar v \neq \bar w$, the two internal cuts are star-disjoint.

\end{description}

Therefore in either of our cases we can construct a perfect set with two star-disjoint internal cuts. This concludes our proof of Lemma~\ref{lem:DPA-Construction}.
 \end{proof}

\subsection{Analysis of 1.5-Approximation Ratio}
\label{sec:1.5DPA-Analysis}

Let $\mathcal{I}$ be the set of all possible {\sc SSC} problem instances with bidirected input digraphs. We denote the solution from our algorithm on some $I\in \mathcal{I}$ as $A(I)$ and the optimal solution as $OPT(I)$. We let $k$ denote the total number of perfect sets added by our algorithm. Further, we let $A_i$ denote the number of perfect sets of size $i$ added by the algorithm. 

\begin{lemma}\label{lem:DPA-A(G)Bound}
$|A(I)|=n+k-1$
\end{lemma}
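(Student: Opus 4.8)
This lemma mirrors the earlier $|R|=n+k-1$ result for {\sc 2ECS}, so the plan is to adapt that counting argument to perfect sets. The algorithm repeatedly contracts perfect sets until a single vertex remains, and each perfect set of size $i$ adds exactly $i$ stars to $R$ (so contributes $i$ to the final cost $|A(I)|$) while merging its $i$ source vertices into one supervertex, thereby reducing the vertex count by $i-1$.

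First I would introduce the notation $A_i$ for the number of perfect sets of size $i$ that the algorithm contracts, which the excerpt has already set up. The key structural observation is that contracting a perfect set $Q$ with $|Q|=i$ collapses the $i$ vertices of $source(Q)$ into a single supervertex: this is exactly $i-1$ fewer vertices. Since the algorithm terminates precisely when the graph is reduced to one vertex, the total reduction in vertices across all contractions must equal $n-1$. This yields the vertex-balance equation
\begin{equation*}
\sum_{i\geq 2}(i-1)A_i = n-1.
\end{equation*}

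Next I would compute the cost. Each contracted perfect set $Q$ of size $i$ adds exactly $i$ stars to $R$ (line 5 of Algorithm~\ref{alg:1.5DPA} sets $R:=R\cup Q$, and distinct contractions contribute disjoint stars since their sources are disjoint). Hence the algorithm's output satisfies
\begin{equation*}
|A(I)| = \sum_{i\geq 2} i\, A_i = \sum_{i\geq 2}(i-1)A_i + \sum_{i\geq 2}A_i = (n-1) + k,
\end{equation*}
using the vertex-balance equation for the first sum and the definition $k=\sum_{i\geq 2}A_i$ (the total number of perfect sets) for the second. This gives $|A(I)|=n+k-1$ as claimed.

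The main subtlety — and the one point I would state carefully rather than wave past — is justifying that the stars contributed across different iterations are genuinely distinct, so that the cost of $R$ is truly the sum $\sum_i i\,A_i$ with no double-counting, and that every contraction has size at least $2$ (so the smallest index is $i=2$). The latter holds because a perfect set must contain a strongly connected subgraph on its sources, which requires at least two distinct source vertices whenever $|V|\geq 2$; a size-one "perfect set" would correspond to a self-loop and never arises during contraction toward a single vertex. I do not anticipate any real obstacle here: the argument is a direct transcription of the {\sc 2ECS} counting lemma, with "cycle of size $i$" replaced by "perfect set of size $i$," relying only on the facts that each perfect set contributes $i$ to the cost and reduces the vertex count by $i-1$.
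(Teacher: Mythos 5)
Your proposal is correct and follows essentially the same argument as the paper: both establish the vertex-balance identity $\sum_{i\geq 2}(i-1)A_i=n-1$ from the fact that each size-$i$ perfect set contracts $i-1$ vertices, and then compute the cost as $\sum_{i\geq 2} iA_i = (n-1)+k$. Your extra remarks on the disjointness of stars across iterations and on $i\geq 2$ only make explicit what the paper leaves implicit.
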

\begin{proof}
Observe that $A_i=0$ for any $i>n$ since the source of stars in a perfect set are distinct. Each perfect set of size $i$ contracts $i-1$ vertices, and over the whole algorithm, we contract $n$ vertices into 1. Therefore $\sum_{i=2}^n (i-1)A_i=n-1$.
Each perfect set of size $i$ contributes $i$ cost to our solution. Then our cost is $\sum_{i=2}^n iA_i=n+k-1$.
 \end{proof}

\begin{lemma}\label{lem:DPA-DualBound1}
$|OPT(I)|\geq n$
\end{lemma}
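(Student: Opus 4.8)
The plan is to establish the lower bound $|OPT(I)|\geq n$ by a counting argument on the structure of any feasible solution, exploiting the fact that a strongly connected spanning subgraph cannot be too sparse. First I would recall that a feasible solution $R\subseteq \mathcal{C}$ must make $(V,\bigcup_{F\in R}F)$ strongly connected on all $n$ vertices. The key observation is that each star $F\in R$ has a single source vertex $source(F)$, so the number of distinct source vertices appearing among the stars of $R$ is at most $|R|$.

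Next I would argue that every vertex must serve as a source of at least one selected star. Indeed, in a strongly connected digraph on $n\geq 2$ vertices, every vertex must have positive out-degree, since otherwise it could not reach any other vertex. In the {\sc SSC} formulation, the only arcs leaving a vertex $v$ come from stars sourced at $v$; thus for $(V,\bigcup_{F\in R}F)$ to have an arc out of $v$, the set $R$ must contain at least one star sourced at $v$. Applying this to all $n$ vertices, the stars in $R$ collectively have at least $n$ distinct sources.

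Since distinct source vertices require distinct stars (a single star has exactly one source), we obtain $|R|\geq n$. As this holds for every feasible solution, in particular for the optimum, we conclude $|OPT(I)|\geq n$.

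The argument is essentially routine, so I do not anticipate a serious obstacle; the only point requiring slight care is the degenerate case $n=1$, where strong connectivity is vacuous and the bound $|OPT(I)|\geq 1$ need not hold, but this case is excluded by the algorithm's loop condition $|V|\neq 1$ and is irrelevant to the analysis. One could alternatively route this bound through the LP relaxation: by Lemma~\ref{lem:SSC-LPcorrectness} the integer program is exactly {\sc SSC}, and summing the primal cut constraints over the $n$ singleton cuts $\{v\}$ (each demanding an outgoing star) recovers the same conclusion, but the direct combinatorial argument above is cleaner.
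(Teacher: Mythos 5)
Your proof is correct, but it takes a genuinely different route from the paper's. You bound $|OPT(I)|$ combinatorially: in a strongly connected digraph on $n\geq 2$ vertices every vertex needs an outgoing arc, outgoing arcs at $v$ can only be supplied by stars sourced at $v$, and each star has a unique source, so any feasible $R$ must contain at least $n$ distinct stars. The paper instead stays inside its dual-based framework: it sets $y_{\{v\}}=1$ for every singleton cut, observes that each star $F$ crosses only the singleton cut of $source(F)$ (so the dual constraints are satisfied), and concludes $|OPT(I)|\geq n$ by weak duality. The two arguments are close in spirit---your closing remark about summing the primal constraints over singleton cuts is exactly the paper's dual certificate viewed from the primal side---but the choice matters for what comes later. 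The paper's version shows that the \emph{fractional} LP optimum is at least $n$ and produces an explicit integer dual solution; both facts are reused in establishing Corollary~\ref{cor:1.5DPA-IntegralityUpper}, whose proof explicitly invokes ``the two dual solutions used in Lemma~\ref{lem:DPA-DualBound1} and~\ref{lem:DPA-DualBound2}.'' Your combinatorial argument bounds only the integer optimum, so if it were substituted in, the integrality-gap corollary would need the dual certificate reconstructed separately (easy, but no longer free). What your approach buys in exchange is independence from the LP machinery: it needs neither Lemma~\ref{lem:SSC-LPcorrectness} nor weak duality, and it proves the lemma exactly as stated, including your (correct, if inessential) attention to the degenerate case $n=1$.
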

\begin{proof}
Consider the dual solution of assigning one to the cut $\{v\}$ for all $v\in V$. This dual feasible solution has objective $n$. The lemma follows from weak duality.
 \end{proof}

\begin{lemma}\label{lem:DPA-DualBound2}
$|OPT(I)|\geq 2k$
\end{lemma}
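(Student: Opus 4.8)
The plan is to exhibit an explicit integer-feasible solution to {\sc SSC Dual LP} whose objective equals $2k$ and then invoke weak duality, exactly as in the {\sc 2ECS} analysis of Section~\ref{sec:1.5-2ECS}. Recall that an integer feasible dual solution corresponds to a collection of pairwise star-disjoint cuts: setting $y_S = 1$ for each cut $S$ in such a collection satisfies every constraint $\sum_{S : F \in \partial \mathcal{C}(S)} y_S \leq 1$ precisely because no star $F$ crosses two of the chosen cuts. So it suffices to produce $2k$ pairwise star-disjoint cuts, each of which I assign dual value $1$.

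The natural candidate is the collection of all internal cuts used by the algorithm. In each of its $k$ iterations, Lemma~\ref{lem:DPA-Construction} supplies a perfect set together with two star-disjoint internal cuts, giving $2k$ cuts in total. I would first lift each such cut back to the original vertex set $V$ by replacing every supervertex with the set of original vertices contracted into it, so that each becomes a genuine cut $\emptyset \subset S \subset V$ of $G$.

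It then remains to verify that these $2k$ lifted cuts are pairwise star-disjoint. There are two cases. Two cuts arising in the same iteration are star-disjoint by the guarantee of Lemma~\ref{lem:DPA-Construction}. For two cuts from different iterations, say $S$ from the earlier one, I would use the property established in Subsection~\ref{sec:1.5DPA-Defintitons}: contracting a perfect set with internal cut $S$ removes every star in $\partial \mathcal{C}(S)$ from the digraph, so no cut found in any later iteration can share a crossing star with $S$. Chaining this observation over all iterations shows the whole family is pairwise star-disjoint, hence $y_S = 1$ on these $2k$ cuts is dual feasible with objective $2k$, and $|OPT(I)| \geq 2k$ follows from weak duality together with Lemma~\ref{lem:SSC-LPcorrectness}.

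The step I expect to be most delicate is the lifting of internal cuts from the contracted graphs back to $G$ while preserving the star-crossing structure: I must check that a star crosses a lifted cut in $G$ in a way consistent with its (possibly shrunken) image crossing the corresponding cut in the contracted graph, so that the ``an internal cut, once contracted, removes its crossing stars'' property really does translate into honest star-disjointness in the original dual program. Once this bookkeeping is pinned down, the feasibility argument and the final weak-duality conclusion are routine.
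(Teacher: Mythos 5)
Your proposal is correct and follows essentially the same argument as the paper: assign $y_S=1$ to the two star-disjoint internal cuts from each of the $k$ iterations, argue that a star crossing one of these cuts has all its arcs removed by the corresponding contraction and hence crosses no later cut, and conclude by weak duality. The lifting of cuts from contracted graphs back to $V$ that you flag as delicate is glossed over entirely in the paper's own proof, so your treatment is, if anything, slightly more careful.
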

\begin{proof}
Whenever the algorithm adds a perfect set, we can identify two star-disjoint internal cuts. We construct a dual feasible solution by assigning each of our internal cuts $y_S=1$. To accomplish this, we must show that no star $F$ crosses multiple of our internal cuts. Consider any star $F\in \mathcal{C}$ crossed by at least of our cuts. Let $S$ be the first of our internal cuts with $F\in \partial \mathcal{C}(S)$. Since $S$ is an internal cut, all arcs of $F$ will be removed from the graph after contracting the corresponding perfect set. Thus this $F$ will be crossing at most one of our internal cuts.
So we have a dual feasible solution with objective $2k$. The lemma follows from weak duality.
 \end{proof}

By taking a convex combination of Lemmas~\ref{lem:DPA-DualBound1} and~\ref{lem:DPA-DualBound2}, we know the following:

\begin{equation} \label{eq:DPA-ConvexCombination}
|OPT(I)| \geq \frac{2}{3}n + \frac{1}{3}(2k).
\end{equation}

Taking the ratio of Lemma~\ref{lem:DPA-A(G)Bound} and Equation (\ref{eq:DPA-ConvexCombination}), we get a bound on the approximation ratio. Straightforward algebra on this ratio completes our proof of Theorem~\ref{thm:1.5DPA}:
\begin{align*}
\frac{|A(I)|}{|OPT(I)|} & \leq \frac{n+k-1}{\frac{2n}{3}+\frac{2k}{3}} < \frac{3}{2}=1.5.
\end{align*}

As a corollary, 1.5 upper bounds the ratio between integer primal and integer dual solutions to our program. This is easily verified on any {\sc SSC} instance $I$ by choosing $A(I)$ for the integer primal and the larger of the two dual solutions used in Lemma~\ref{lem:DPA-DualBound1} and~\ref{lem:DPA-DualBound2}. Corollary~\ref{cor:1.5DPA-IntegralityUpper} follows from this observation.
Further, our analysis of the approximation ratio is tight as shown by an example in our appendix.
\begin{theorem} \label{thm:1.5DPA-Tightness}
The 1.5-approximation ratio of Algorithm~\ref{alg:1.5DPA} is tight.
\end{theorem}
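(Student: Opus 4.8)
The plan is to exhibit a family of bidirected {\sc SSC} instances $\{I_g\}_{g\geq 2}$ on which the algorithm's output approaches $3/2$ times the optimum. The analysis pins down exactly what such a family must look like: since $|A(I)|=n+k-1$ (Lemma~\ref{lem:DPA-A(G)Bound}) while $|OPT(I)|\geq\max\{n,2k\}$ (Lemmas~\ref{lem:DPA-DualBound1} and~\ref{lem:DPA-DualBound2}), the ratio $\frac{n+k-1}{|OPT(I)|}$ can tend to $3/2$ only when $n\approx 2k$ and $|OPT(I)|=n$, i.e.\ when both lower bounds are simultaneously (essentially) tight. So I would aim for an instance with $n$ vertices whose optimum is exactly $n$, yet on which the algorithm is forced to build $k\approx n/2$ perfect sets, each of size about $3$ (reducing the vertex count by about $2$).

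The construction I would use is a \emph{spider}: a central vertex $c$ joined by bidirected edges to arm-tips $a_1,\dots,a_g$, each $a_i$ carrying one pendant leaf $l_i$ (the bidirected edge $a_il_i$). Thus $n=2g+1$. For the star set $\mathcal C$ I would include, for every $i$, the stars $\{a_ic,\,a_il_i\}$ and $\{l_ia_i\}$, the singleton $\{ca_i\}$, and in addition one ``big'' star from $c$ to all arm-tips, $\{ca_1,\dots,ca_g\}$; together these cover every arc. The degree-$2$ arm-tips are what make the instance hard for the algorithm, while the big star is what makes its optimum cheap.

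Two things must then be checked. First, $|OPT(I_g)|=n$: choosing the big star at $c$, the star $\{a_ic,\,a_il_i\}$ at each $a_i$, and $\{l_ia_i\}$ at each leaf gives a strongly connected spanning subgraph using exactly $n$ stars, so $|OPT|\leq n$; combined with the singleton-cut bound $|OPT|\geq n$ (Lemma~\ref{lem:DPA-DualBound1}) this forces $|OPT|=2g+1$. Second, I would trace the algorithm. Because each $a_i$ has only the non-leaf neighbor $c$, the cycle construction of Lemma~\ref{lem:DPA-Construction} can never extend a path past an arm-tip, so every cycle it finds is the $2$-cycle $\{c,a_i\}$; this is Case~1, and it yields the size-$3$ perfect set $\{a_i,l_i,c\}$ with star-disjoint internal cuts $\{l_i\}$ and $V\setminus\{l_i\}$. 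Contracting it absorbs $c$ and leaves a spider with one fewer arm, so the algorithm performs exactly $k=g$ contractions. Then $|A(I_g)|=n+k-1=3g$ by Lemma~\ref{lem:DPA-A(G)Bound}, and $\frac{|A(I_g)|}{|OPT(I_g)|}=\frac{3g}{2g+1}\to\frac32$.

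The step I expect to be the main obstacle is reconciling a cheap optimum with forced-wasteful greedy behaviour, and it is exactly why the star set carries redundant copies of the arcs $ca_i$. Making $|OPT|=n$ wants $c$ to reach all arms through the single big star; but if the expansion of Lemma~\ref{lem:augment} were to use that multi-sink star when completing $\{a_i,l_i,c\}$ to a perfect set, it would drag every arm-tip into one giant perfect set, collapsing $k$ to $1$ and sending the ratio to $1$. Supplying the singletons $\{ca_i\}$ lets the expansion close each perfect set at size $3$ while leaving the big star available to the optimum. I would also have to confirm that $V\setminus\{l_i\}$ really is internal despite the big star: the stars crossing it are exactly those at $a_i$ pointing to $l_i$, namely $\{a_ic,\,a_il_i\}$, whose sinks $\{c,l_i\}$ lie inside $\{a_i,l_i,c\}$, whereas the big star points only to the other arm-tips and does not cross this cut. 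A final check is that the $2g$ internal cuts produced over the run remain globally star-disjoint, which is guaranteed by the contraction argument of Lemma~\ref{lem:DPA-DualBound2}.
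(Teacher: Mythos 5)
Your proof is correct, but your construction is genuinely different from the paper's. The paper's family $G_k$ uses only size-one stars (two hubs $\bar v,\bar w$, a path of vertices $u_1,\dots,u_{k+1}$, and interleaved leaves $l_1,\dots,l_k$): there the adversarial nondeterminism lies in the \emph{cycle choice} --- the algorithm may first contract the long cycle through $\bar v,u_{k+1},\bar w,u_k,\dots,u_1$ that strands every leaf, and must then pay two stars per leaf in $k$ subsequent $2$-cycle contractions, giving $3k+3$ stars against a Hamiltonian-cycle optimum of $2k+3$. Your spider family instead exploits the star structure itself: the multi-sink star at the center makes the optimum cheap ($2g+1$), the singleton stars $\{ca_i\}$ give the expansion of Lemma~\ref{lem:augment} a wasteful option, and the adversarial nondeterminism lies entirely in the \emph{expansion's} choice of which covering star to add --- the cycle construction itself is forced on your instance (every cycle it can find is a $2$-cycle $\{c,a_i\}$, landing in Case~1 of Lemma~\ref{lem:DPA-Construction}). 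I verified your trace: each iteration yields the size-$3$ perfect set $\{\{a_ic,a_il_i\},\{l_ia_i\},\{ca_i\}\}$ with star-disjoint internal cuts $\{l_i\}$ and $V\setminus\{l_i\}$, contracting it leaves exactly the spider with one fewer arm, so $k=g$, $|A(I_g)|=n+k-1=3g$ by Lemma~\ref{lem:DPA-A(G)Bound}, $|OPT(I_g)|=2g+1$, and $\frac{3g}{2g+1}\to\frac{3}{2}$. What the paper's example buys is that tightness holds even when every star is a singleton, i.e., with no reliance on overlapping or multi-arc stars, so the $3/2$ loss already appears from poor cycle selection alone; what yours buys is a simpler graph and a sharper localization of the algorithm's weakness, attributing the full $3/2$ loss purely to the freedom in Lemma~\ref{lem:augment}'s expansion step.
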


\section{A 1.6-Approximation for {\sc SSC}}
\label{sec:1.6SSC}

As in our approximation for {\sc DPA}, we need a method to construct perfect sets with internal cuts. Without the restriction to bidirected input digraphs, we are unable the guarantee two star-disjoint internal cuts in each perfect set. Such a construction would give {\sc SSC} a 1.5-approximation. Instead, we guarantee the following weaker condition.

\begin{lemma}\label{lem:1.6SSC-Construction}
Every instance of {\sc SSC} has a perfect set $Q$ with either $|Q|\geq 4$ and one internal cut, or two star-disjoint internal cuts. 
\end{lemma}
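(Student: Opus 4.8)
The plan is to mirror the cycle-based construction of Lemma~\ref{lem:DPA-Construction}, but to replace the bidirected path-growing (which relied on reversing arcs) by a growth procedure that respects arc orientation. First I would grow a directed path $P = v_0 v_1 \cdots v_t$ by repeatedly appending an out-arc to a vertex not yet on $P$; since $P$ has bounded length the procedure halts at a vertex $\bar v = v_t$ all of whose out-neighbors already lie on $P$. Closing the cycle with the arc from $\bar v$ to its \emph{earliest} out-neighbor $w = v_j$ on $P$ produces a directed cycle $C = v_j v_{j+1}\cdots v_t$ with the crucial property that every out-neighbor of $\bar v$ lies at position $\geq j$, hence inside $C$.

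Next I would select one star sourced at each of $v_j,\dots,v_t$ containing the corresponding cycle arc; these have distinct sources and form a quasiperfect set, which Lemma~\ref{lem:augment} expands to a perfect set $Q$ with $source(Q)\supseteq V(C)$ and $|Q|\geq |C|$. Because every star of $\mathcal{C}$ sourced at $\bar v$ has all its sinks in $N^+(\bar v)\subseteq V(C)\subseteq source(Q)$, the singleton $\{\bar v\}$ is an internal cut of $Q$. This already handles the first alternative of the lemma: if $|C|\geq 4$, or if the expansion forces $|Q|\geq 4$, then $Q$ is a perfect set of size at least $4$ carrying the internal cut $\{\bar v\}$, and we are done.

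The remaining work, and the main obstacle, is the short-cycle regime where $|C|\in\{2,3\}$ and the expanded $Q$ still has size at most $3$; here I must produce a \emph{second} internal cut that is star-disjoint from $\{\bar v\}$. The useful observation is that any cut $S$ internal to $Q$ with $\bar v\notin S$ is automatically star-disjoint from $\{\bar v\}$, since every star crossing $\{\bar v\}$ is sourced at $\bar v\notin S$ and hence cannot cross $S$. Thus it suffices to exhibit one more internal cut avoiding $\bar v$, ideally a singleton $\{u\}$ at a second cycle vertex $u$ whose entire out-neighborhood lies in $source(Q)$, or else a complementary cut such as $V\setminus\{u\}$ that controls the stars entering $u$.

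Guaranteeing such a cut is exactly where bidirectedness was doing the heavy lifting in Lemma~\ref{lem:DPA-Construction}: there one could reverse a segment of the path to arrange a second vertex $\bar w$ with a confined neighborhood, whereas here arc reversal is illegal. I therefore expect to need a dedicated case analysis on the few vertices of $source(Q)$ and the arcs among and into them, analogous to the $\bar v=\bar w$ and $\bar v\neq\bar w$ cases of the DPA proof, choosing the initial arc or the stars of $Q$ so that either a second confined vertex or a suitable complementary internal cut is forced. Since $source(Q)$ has at most three vertices in this regime, the analysis is finite but must be carried out carefully to respect orientations.
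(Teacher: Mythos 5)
Your setup coincides with the paper's: the same orientation-respecting path-growing construction yields a cycle $C$ with $N^+(\bar v)\subseteq V(C)$, the same selection of one star per cycle arc expanded via Lemma~\ref{lem:augment} gives a perfect set $Q$ with internal cut $\{\bar v\}$, and your observation that any internal cut avoiding $\bar v$ is automatically star-disjoint from $\{\bar v\}$ is correct and implicitly used by the paper. The case $|C|\geq 4$ (or $|Q|\geq 4$ after expansion) is handled identically. However, there is a genuine gap: for $|C|\in\{2,3\}$ you only announce that ``a dedicated case analysis'' will be needed, and you never produce the second internal cut. That case analysis is not a routine finishing step --- it is the entire substance of the lemma, and the paper spends four subcases on $|C|=3$ and a further chain of cases on $|C|=2$ to close it.

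Moreover, your stated expectation for how that analysis would go is off in a way that matters. You suggest the work is ``finite'' because $source(Q)$ has at most three vertices, and that the second cut should be a singleton $\{u\}$ or a complement $V\setminus\{u\}$ determined by arcs among and into those few vertices. In the paper's proof the second cut is generally \emph{not} local to the cycle: it is a reachability set $R_C(u)$ (the vertices reachable from a cycle vertex $u$ without using arcs internal to $C$), or $\{\bar v\}\cup R_C(u_1)$, or $V\setminus\{\bar v\}$, and establishing that these cuts are internal requires a global dichotomy --- either there exists a nontrivial path between cycle vertices internally disjoint from $V(C)$, in which case one \emph{enlarges the cycle} and falls back to the $|C|\geq 4$ case, or no such path exists, in which case the reachability set is crossed by only one arc and becomes the second internal cut. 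Without this enlarge-or-use-reachability alternation (and the auxiliary observation that a cycle-arc star with an external sink already forces $|Q|\geq 4$), a purely local inspection of $source(Q)$ and its incident arcs cannot succeed: a size-3 perfect set can have every candidate singleton cut $\{u_i\}$ fail to be internal because $u_i$ has out-arcs leaving $source(Q)$, and only the global structure rules this out or redirects to the large-cycle case. So the proposal is an accurate restatement of the paper's scaffolding plus a plan, but the proof of the statement is not there.
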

\begin{proof}
We use $N^+(v)$ to denote the set of out-neighbors of a vertex $v$ in $G$. We use the following cycle construction, which is a simplification of the construction used for {\sc DPA}.

\begin{algorithmic}[1]
  \STATE Set $P$ to be any arc $r\bar v\in E$
  \WHILE{$\exists u\in N^+(\bar v) \setminus V(P)$}
    \STATE $P:=P$ concatenated with the arc $\bar vu$
    \STATE $\bar v:=u$
  \ENDWHILE
  \STATE Set $w$ to be the vertex in $N^+(\bar v)$ earliest in $P$ 
  \STATE Set $C$ to the cycle using $\bar vw$ and arcs in $P$
  \RETURN $C$ and $\bar v$
\end{algorithmic}

\begin{lemma} \label{lem:cycle}
This construction will output a cycle $C$ and $\bar v \in V(C)$ such that $N^+(\bar v)\subseteq V(C)$.
\end{lemma}
\begin{proof}
Follows immediately from our choice of $C$.
 \end{proof}
Using this $C$ and $\bar v$, we will construct our perfect set. We consider this in three separate cases: $|C|\geq 4$, $|C|=3$, and $|C|=2$.

\begin{description}
\item[Case 1: $|C|\geq 4$.] 
We construct a quasiperfect set by iterating over the arcs of our cycle $C$, selecting a star containing each arc.
Then let $Q$ be the perfect set created by expanding this set using Lemma~\ref{lem:augment}. Note that $|Q|\geq 4$. Since $N^+(\bar v)\subset V(C)$, the cut $\{\bar v\}$ will be internal to $Q$.


\item[Case 2: $|C|=3$.] Our cycle construction must have found a cycle $C$ on vertices $\{\bar v, u_1, u_2\}$. Then $\bar v$ has the property that $N^+(\bar v)\subseteq\{u_1,u_2\}$. Suppose some of the cycle arcs, $\bar vu_1$, $u_1u_2$, or $u_2\bar v$, are part of a star with a vertex outside our cycle in its sink set. Then we can construct a quasiperfect set containing this star and a star for each other arc in the cycle. Expanding this quasiperfect set, as defined in Lemma~\ref{lem:augment}, will produce a perfect set of size four or more with the internal cut $\{\bar v\}$. So we assume that no such $F$ exists, and thus $C$ is a perfect set. 

If there exists a nontrivial path from $u_1$ to $u_2$ or from $u_2$ to $\bar v$ internally disjoint from $V(C)$, then we can replace an arc of $C$ with this path to get a larger cycle that has the same property with respect to $\bar v$ (nontrivial meaning with $|P|\geq2$). Then we can apply Case 1 to handle the new cycle.

Similarly, we can assume at least one of the following does not exist: path from $u_2$ to $u_1$ internally disjoint from $V(C)$, path from $u_1$ to $\bar v$ internally disjoint from $V(C)$, or the arc from $\bar v$ to $u_2$.
If all three of these existed and at least one of the paths is nontrivial, we could construct a larger cycle with the same property with respect to $\bar v$ by starting at $\bar v$, following the arc to $u_2$, following the path to $u_1$, finally following the path to $\bar v$. We know the paths from $u_2$ to $u_1$ and $u_1$ to $\bar v$ are internally disjoint because any overlap would create a path from $u_2$ to $\bar v$.
Thus this construction will produce a larger cycle containing all of neighbors of $\bar v$. Then we can assume this structure does not exist.

We handle the four possible cases of our assumption separately. Note that the strong connectivity of the digraph implies there exists a path between any pair of vertices. Let $R_C(u)$ be the set of vertices reachable by $u$ without using any arcs with both endpoints in $C$.
\begin{description}
  \item[Subcase 2.1] No path from $u_2$ to $u_1$ exists that is internally disjoint from $V(C)$.
  Note that we also assume no nontrivial path from $u_2$ to $\bar v$ exists. Then $R_C(u_2)$ includes neither $\bar v$ nor $u_1$. Further, the only arc crossing $R_C(u_2)$ is $u_2\bar v$ (since we have assumed the arc $u_2u_1$ does not exist). It follows that $R_C(u_2)$ is internal to any perfect set contracting all three cycle vertices since no star contains a cycle arc and an arc to a fourth external vertex.
  
  Furthermore, the cut $\{\bar v\}$ will be internal to any perfect set contracting all three cycle vertices. Therefore we select any star containing each of our cycle arcs to produce a perfect set with internal cuts $\{\bar v\}$ and $R_C(u_2)$. These two cuts are star-disjoint since they have no vertices in common. \\

  \item[Subcase 2.2] No path from $u_1$ to $\bar v$ exists that is internally disjoint from $V(C)$.
  Note that we also assume no nontrivial path from $u_1$ to $u_2$ exists. Then $R_C(u_1)$ includes neither $\bar v$ nor $u_2$. Further, the only arc crossing this cut is $u_1u_2$ (since we have assumed the arc $u_1\bar v$ does not exist). It follows that $R_C(u_1)$ is internal to any perfect set contracting all three cycle vertices since no star contains a cycle arc and an arc to a fourth external vertex.
  
  Furthermore, the cut $\{\bar v\}$ will be internal to any perfect set contracting all three cycle vertices. Therefore we select any star containing each of our cycle arcs to produce a perfect set with internal cuts $\{\bar v\}$ and $R_C(u_1)$. These two cuts are star-disjoint since they have no vertices in common. \\

  \item[Subcase 2.3] The arc $\bar vu_2$ does not exist.
  Again, we select a perfect set made by selecting a star containing each arc of our cycle. The cut $\{\bar v\}$ is internal to $C$ from our choice of $C$ and $\bar v$. Further, we know that no nontrivial path exists from $\bar v$ to $u_2$ internally disjoint from $V(C)$ and no nontrivial path exists from $u_1$ to $u_2$. Then we can conclude that the cut $\{\bar v\}\cup R_C(u_1)$ is only crossed by the arc $u_1u_2$ (since $\bar vu_2$ does not exist). Since we assumed no star containing a cycle arc and an arc to a fourth external vertex exists, $\{\bar v\}\cup R_C(u_1)$  is also internal to our perfect set.
These two cuts are star-disjoint because the arc $\bar vu_2$ does not exist.\\

  \item[Subcase 2.4] The arcs $\bar vu_2$, $u_2u_1$ and $u_1\bar v$ exist, but no nontrivial paths exist from $u_2$ to $u_1$ or from $u_1$ to $\bar v$ internally disjoint from $V(C)$.
Suppose any of the reversed cycle arcs $\bar vu_2$, $u_2u_1$ or $u_1\bar v$ are part of a star containing an arc to a fourth vertex outside of $V(C)$. Then we could construct a perfect set of size four or more with internal cut $\{\bar v\}$ based on this cycle.
  
Now we assume no such stars exist. We select our perfect set by choosing any star containing each of our cycle arcs. The cut $\{\bar v\}$ will be internal to such a perfect set. Since no nontrivial path exists from $u_1$ to $u_2$ or to $\bar v$ that is internally star-disjoint from $V(C)$, the cut $R_C(u_1)$ will only be crossed by the arcs $u_1u_2$ and $u_1\bar v$. Thus $R_C(u_1)$ is internal to our perfect set. These two cuts are star-disjoint since they have no vertices in common.\\
\end{description}

Thus under any case we can find two star-disjoint internal cuts in our cycle or a larger perfect set with one internal cut.

\item[Case 3: $|C|=2$.] Our cycle construction must have found a cycle $C$ on vertices $\{\bar v, u_1\}$. Note that the only arc leaving $\bar v$ goes to $u_1$, and there is a star containing only this arc. The strong connectivity of our digraph implies there is a path from $u_1$ to $\bar v$. If a nontrivial path $Q$ exists from $u_1$ to $\bar v$, then we can replace $C$ with the cycle made by concatenating the arc $\bar vu_1$ with $Q$. This larger cycle can then be processed by either Case 1 or 2. So we can assume that the only path from $u_1$ to $\bar v$ is the arc between them.
Consider the cut $V\setminus \{\bar v\}$, which is only crossed by $u_1\bar v$. Either there exists a $F_1\in C$ containing $u_1\bar v$ and some $u_1u_2$, or this cut is internal to any perfect set contracting $\bar v$ and $u_1$.
In the latter case, we can choose the perfect set $\{\{\bar vu_1\}, \{u_1\bar v\}\}$. This perfect set has two star-disjoint internal cuts:  $\{\bar v\}$ and $V\setminus \{\bar v\}$.

If this $F_1$ and $u_2$ exist, then any nontrivial path $Q$ from $u_2$ to $u_1$ would create a quasiperfect set of size at least four. Then by Lemma~\ref{lem:augment}, we could find a perfect set of size four or more with the internal cut $\{\bar v\}$.
Finally, we handle the case where the only path from $u_2$ to $u_1$ is the arc $u_2u_1$. Let $R\subset V$ be the set of all vertices that can be reached by $u_2$ without using the arc $u_2u_1$.
Consider the cut given by $R$, which is only crossed by $u_2u_1$. Either there exists a $F_2\in C$ containing $u_2u_1$ and some $u_2u_3$, or $R$ is internal to any perfect set contracting $u_1$ and $u_2$.
In the former case, expanding the quasiperfect set $\{F_1,F_2\}$, as defined in Lemma~\ref{lem:augment}, will give a perfect set of size at least four with internal cut $\{\bar v\}$. In the latter case, expanding the quasiperfect set $\{F_1\}$, as defined in Lemma~\ref{lem:augment}, will give a perfect set with two star-disjoint internal cuts: $\{\bar v\}$ and $R$. 

\end{description}

Therefore regardless of the size of $\mathcal{C}$, we can construct either a size four or more perfect set with an internal cut or a smaller perfect set with two star-disjoint internal cuts. This concludes our proof of Lemma~\ref{lem:1.6SSC-Construction}.
 \end{proof}

Using Lemma~\ref{lem:1.6SSC-Construction}, our approximation algorithm is very simple. We repeatedly apply this construction and contract the resulting perfect set. This procedure is formally given in Algorithm~\ref{alg:1.6SSC}.

\begin{algorithm}
\caption{Dual-Based Approximation for {\sc Star Strong Connectivity}} \label{alg:1.6SSC}
\begin{algorithmic}[1]
  \STATE $R=\emptyset$
  \WHILE{$|V|\neq 1$}
    \STATE Find a perfect set $Q$ as shown in Lemma~\ref{lem:1.6SSC-Construction}
	\STATE Contract the sources of $Q$ into a single vertex
    \STATE $R:=R\cup Q$
  \ENDWHILE
\end{algorithmic}
\end{algorithm}

\subsection{Analysis of 1.6-Approximation Ratio }
\label{sec:1.6SSC-Analysis}

The analysis of our approximation ratio is very similar to the analysis given in Section~\ref{sec:1.5DPA-Analysis}. Let $\mathcal{I}$ be the set of all possible {\sc SSC} problem instances. We denote the solution from our algorithm on some $I\in \mathcal{I}$ as $A(I)$ and the optimal solution as $OPT(I)$. We let $A_i$ denote the number of perfect sets of size $i$ added by the algorithm.

\begin{lemma}\label{lem:SSC-A(G)Bound}
$|A(I)|=\sum\limits_{i=2}^n iA_i$
\end{lemma}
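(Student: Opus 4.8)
The plan is to mirror exactly the counting argument already carried out for the DPA algorithm in Lemma~\ref{lem:DPA-A(G)Bound}, since the SSC algorithm (Algorithm~\ref{alg:1.6SSC}) and the DPA algorithm (Algorithm~\ref{alg:1.5DPA}) differ only in the construction invoked to find a perfect set, and not in how contraction affects the vertex count or the cost. The target identity $|A(I)|=\sum_{i=2}^{n}iA_i$ is simply the statement that the total cost of the output equals the sum, over all perfect sets contracted, of their sizes. So the whole proof is bookkeeping about what one iteration of the while-loop does.

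First I would recall the two facts about a single perfect set $Q$ of size $i$ that are established by the definitions in Section~\ref{sec:1.5DPA-Defintitons}: contracting $Q$ replaces the $i$ distinct source vertices $source(Q)$ with a single supervertex, so it reduces the vertex count by exactly $i-1$; and adding $Q$ to the running solution $R$ contributes exactly $i$ stars to the cost, since $|Q|=i$. (The fact that the sources are distinct, hence genuinely $i$ vertices, is part of the definition of quasiperfect/perfect and so each such set really has size between $2$ and $n$, giving $A_i=0$ for $i>n$ as in the DPA case.) With these in hand, the cost of the algorithm's output is the sum of the contributions of each contracted perfect set, which is precisely $\sum_{i=2}^{n} iA_i$ because $A_i$ counts the perfect sets of size $i$. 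That is the entire claimed equality.

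Concretely I would write: since each perfect set of size $i$ contributes $i$ to $R$ and there are $A_i$ such sets, the total cost is $|A(I)|=\sum_{i=2}^n iA_i$. If one wants the analogue of the $n+k-1$ form, observe that because each contraction of a size-$i$ set removes $i-1$ vertices and the algorithm runs until a single vertex remains, $\sum_{i=2}^n (i-1)A_i = n-1$; but for this lemma as stated we need only the direct summation. I would phrase the argument to stand on its own so it reads cleanly even though it is essentially identical to Lemma~\ref{lem:DPA-A(G)Bound}.

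The only thing to be careful about — and it is the one potential obstacle rather than a genuine difficulty — is confirming that the while-loop terminates and that every iteration indeed contracts a perfect set of size at least $2$, so the indexing $\sum_{i=2}^n$ is correct. This follows because Lemma~\ref{lem:1.6SSC-Construction} guarantees a perfect set exists at every step (the digraph stays strongly connected under contraction), and any perfect set has at least two distinct sources whenever $|V|\geq 2$, so each iteration strictly decreases $|V|$ and the loop halts with the final vertex accounted for. Beyond that, the statement is a pure accounting identity and I expect no substantive obstacle; the real work of the analysis lies in the subsequent lower bounds on $|OPT(I)|$, not here.
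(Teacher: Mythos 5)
Your proof is correct and is essentially the paper's own argument: the paper's proof is the single sentence that each perfect set of size $i$ contributes $i$ stars to the solution, which is exactly your core accounting step. The extra care you take about termination, distinct sources, and the index range is sound but not needed for the identity as stated.
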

\begin{proof}
Each perfect set in $A_i$ has $i$ stars, and thus contributes $i$ cost to our solution.
 \end{proof}

\begin{lemma}\label{lem:SSC-DualBound1}
$|OPT(I)|\geq n>\sum\limits_{i=2}^n (i-1)A_i$
\end{lemma}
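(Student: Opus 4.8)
The statement has two parts joined by inequalities: first that $|OPT(I)|\geq n$, and second that $n > \sum_{i=2}^n (i-1)A_i$. The first inequality is identical to Lemma~\ref{lem:DPA-DualBound1} and its proof carries over verbatim: assigning $y_S=1$ to each singleton cut $\{v\}$ for every $v\in V$ yields a feasible dual solution (each star $F$ crosses exactly one singleton cut, namely $\{source(F)\}$, so the constraint $\sum_{F\in\partial\mathcal{C}(S)}y_S\leq 1$ holds), and this dual solution has objective $n$; weak duality then gives $|OPT(I)|\geq n$. The plan is to simply invoke this argument.

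The substantive part is the strict inequality $n > \sum_{i=2}^n (i-1)A_i$, which is a bookkeeping fact about how contraction reduces the vertex count. The plan is to track the total reduction in vertices across all iterations. Each perfect set of size $i$ contracts its $i$ sources into a single supervertex, reducing the vertex count by $i-1$; summing over all perfect sets added, the total reduction is $\sum_{i=2}^n (i-1)A_i$. Since the algorithm begins with $n$ vertices and the while-loop terminates only when a single vertex remains, the total reduction over the run of the algorithm is exactly $n-1$. Hence $\sum_{i=2}^n (i-1)A_i = n-1 < n$, which is the desired strict inequality. This is essentially the same counting identity established in the first half of Lemma~\ref{lem:DPA-A(G)Bound}.

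I expect no real obstacle here; both halves reduce to facts already proven in Section~\ref{sec:1.5DPA-Analysis}. The only point warranting a moment of care is that the counting identity relies on each perfect set genuinely having distinct sources (so that a set of size $i$ really does merge $i$ distinct vertices into one), which holds by the definition of quasiperfect sets and is used identically in Lemma~\ref{lem:DPA-A(G)Bound}. One should also confirm the algorithm terminates, i.e.\ that the vertex count strictly decreases each iteration; this follows because every perfect set produced by Lemma~\ref{lem:1.6SSC-Construction} has size at least $2$, so $i-1\geq 1$ and each contraction removes at least one vertex. Thus the lemma follows from the vertex-reduction identity $\sum_{i=2}^n (i-1)A_i = n-1$ together with the singleton-cut dual bound.
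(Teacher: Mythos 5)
Your proposal is correct and follows essentially the same route as the paper: the singleton-cut dual solution with weak duality for $|OPT(I)|\geq n$, and the vertex-contraction identity $\sum_{i=2}^n (i-1)A_i = n-1$ for the strict inequality. The extra details you supply (each star crosses only the cut at its source, and distinctness of sources in a perfect set) are exactly what the paper leaves implicit.
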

\begin{proof}
Consider the dual solution of assigning one to the cut $\{v\}$ for all $v\in V$. This dual feasible solution has objective $n$. Since each perfect set of size $i$ added by the algorithm contracts $i-1$ vertices, we know that $n-1=\sum\limits_{i=2}^n (i-1)A_i$. The lemma follows from weak duality.
 \end{proof}

\begin{lemma}\label{lem:SSC-DualBound2}
$|OPT(I)|\geq 2A_2+2A_3+\sum\limits_{i=4}^n A_i$
\end{lemma}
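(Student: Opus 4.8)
The plan is to mirror the dual-construction argument used for Lemma~\ref{lem:DPA-DualBound2}, but to track the internal cuts more carefully, since Lemma~\ref{lem:1.6SSC-Construction} no longer guarantees two star-disjoint internal cuts for every perfect set. I would build a single integer solution to {\sc SSC Dual LP} by collecting internal cuts across all iterations of the algorithm, assigning each collected cut $y_S = 1$ and all other $y_S = 0$, and then invoke weak duality.

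First, for each perfect set $Q$ added by the algorithm I would record the internal cut(s) supplied by Lemma~\ref{lem:1.6SSC-Construction}. The key observation is that the first alternative of that lemma requires $|Q| \geq 4$, so any perfect set of size $2$ or $3$ must fall into the second alternative and therefore comes with \emph{two} star-disjoint internal cuts, whereas a set of size at least $4$ is guaranteed at least one internal cut. Recording two cuts for every size-$2$ or size-$3$ set and one cut for every larger set yields a collection whose cardinality is at least $2A_2 + 2A_3 + \sum_{i=4}^n A_i$, which will be the objective value of the dual solution.

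Next, I would verify that this assignment is dual feasible, i.e.\ that no star $F \in \mathcal{C}$ lies in $\partial \mathcal{C}(S)$ for two distinct recorded cuts $S$. There are two cases. If both cuts come from the same perfect set, then they are star-disjoint by construction, so $F$ cannot cross both. If they come from different perfect sets, let $S$ be the one recorded earlier and $Q$ its perfect set; because $S$ is internal to $Q$, the definition of an internal cut forces $source(F) \in source(Q)$ and $sinks(F) \subseteq source(Q)$, so contracting $Q$ collapses every endpoint of $F$ into the supervertex and removes all arcs of $F$. Hence $F$ has already been eliminated before the later cut is recorded and cannot cross it. This is exactly the reasoning of Lemma~\ref{lem:DPA-DualBound2}, now applied to a dual solution whose cuts are recorded with size-dependent multiplicity.

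Finally, since the constructed dual solution is feasible with objective at least $2A_2 + 2A_3 + \sum_{i=4}^n A_i$, weak duality gives that the optimum of {\sc SSC Primal LP} is at least this value, and since the integer program is exactly {\sc SSC} (Lemma~\ref{lem:SSC-LPcorrectness}) its optimum $|OPT(I)|$ dominates the LP optimum, yielding the claimed inequality. The step I expect to need the most care is the feasibility bookkeeping: making the ordering ``earlier recorded cut'' well defined relative to the sequence of contractions, and confirming that the internal-cut property genuinely removes every arc of $F$ at the moment of contraction rather than merely shrinking its star. Everything else is a direct, size-aware translation of the {\sc DPA} dual bound.
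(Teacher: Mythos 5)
Your proposal is correct and follows essentially the same route as the paper's proof: collect two star-disjoint internal cuts for each perfect set of size two or three (which, as you note, must fall into the second alternative of Lemma~\ref{lem:1.6SSC-Construction}), one cut for each larger set, argue star-disjointness across iterations via the contraction/removal property of internal cuts, and conclude by weak duality. The only difference is that you spell out the cross-iteration feasibility argument (borrowed from Lemma~\ref{lem:DPA-DualBound2}) that the paper compresses into a single sentence.
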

\begin{proof}
Whenever the algorithm adds a perfect set of size two, we can identify two star-disjoint internal cuts. Similarly, there are two star-disjoint internal cuts in every perfect set of size three and one internal cut in the remaining perfect sets. From the definition of internal cuts, we know that all of these internal cuts will be star-disjoint from previously added internal cuts. So we have a dual feasible solution and the lemma follows from weak duality.
 \end{proof}

By taking a convex combination of Lemmas~\ref{lem:SSC-DualBound1} and~\ref{lem:SSC-DualBound2}, we know the following:

\begin{align}
|OPT(I)|& > \frac{3}{4}(\sum\limits_{i=2}^n (i-1)A_i) + \frac{1}{4}(2A_2+2A_3+\sum\limits_{i=4}^n A_i)\nonumber\\
& =\frac{5}{4}A_2+2A_3+\sum\limits_{i=4}^n (\frac{3}{4}i-\frac{1}{2})A_i. \label{eq:SSC-ConvexCombination}
\end{align}

Taking the ratio of Lemma~\ref{lem:SSC-A(G)Bound} and Equation (\ref{eq:SSC-ConvexCombination}), we get a bound on the approximation ratio. Straightforward algebra on this ratio can show it is at most 8/5:
\begin{align*}
\frac{|A(I)|}{|OPT(I)|} & < \frac{\sum\limits_{i=2}^n iA_i}{\frac{5}{4}A_2+2A_3+\sum\limits_{i=4}^n (\frac{3}{4}i-\frac{1}{2})A_i} \leq \frac{8}{5}=1.6.
\end{align*}

This finishes the proof of Theorem~\ref{thm:1.6SSC}. Corollary~\ref{cor:1.6SSC-IntegralityUpper} follows from the same observation made about our 1.5-approximation for {\sc DPA}. Further, our analysis of the approximation ratio is tight as shown by an example in our appendix.

\begin{theorem}
The 1.6-approximation ratio of Algorithm~\ref{alg:1.6SSC} is tight.
\end{theorem}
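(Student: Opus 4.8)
The plan is to exhibit a family of {\sc SSC} instances $\{I_b\}_{b\geq 1}$ on which Algorithm~\ref{alg:1.6SSC} produces an output whose cost, divided by the optimum, tends to $8/5$ as $b\to\infty$. First I would pin down which perfect-set sizes are responsible for the $8/5$ bound by reading off the per-size contribution to the ratio implicit in Equation~\eqref{eq:SSC-ConvexCombination}. A single perfect set of size $i$ contributes $i$ to the numerator $|A(I)|$ and contributes its matching coefficient to the denominator: $5/4$ for $i=2$, $2$ for $i=3$, and $\tfrac34 i-\tfrac12$ for $i\geq 4$. The resulting per-set ratios are $8/5$ at $i=2$, $3/2$ at $i=3$, $8/5$ at $i=4$, and strictly below $8/5$ for every $i\geq 5$ (tending to $4/3$). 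Hence only sizes $2$ and $4$ are extremal, and a tight family must be built essentially out of size-$2$ and size-$4$ perfect sets, mixed so that the two lower bounds of Lemmas~\ref{lem:SSC-DualBound1} and~\ref{lem:SSC-DualBound2} are simultaneously met with near-equality by the true optimum.

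Solving for that balance fixes the proportions. If the algorithm extracts $A_4=b$ perfect sets of size four and $A_2=2b+1$ of size two (and none of any other size), then the vertex-count identity behind Lemma~\ref{lem:SSC-DualBound1} gives $n-1=\sum_i(i-1)A_i=(2b+1)+3b$, so $n=5b+2$, while the output cost from Lemma~\ref{lem:SSC-A(G)Bound} is $|A(I_b)|=\sum_i iA_i=2(2b+1)+4b=8b+2$. The choice $A_2=2b+1$ is exactly what makes the singleton-cut bound $n=5b+2$ and the internal-cut bound $2A_2+\sum_{i\geq4}A_i=2(2b+1)+b=5b+2$ coincide, so both dual certificates are pinned to $5b+2$. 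It therefore suffices to build $I_b$ with an explicit optimal solution of cost exactly $n=5b+2$; then $|OPT(I_b)|=5b+2$ and the ratio is $\tfrac{8b+2}{5b+2}\to\tfrac85$.

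The construction itself glues together $b$ copies of a \emph{size-four gadget} and $2b+1$ copies of a \emph{size-two gadget} along a global backbone that admits a single cheap strongly-connected covering. Each size-four gadget is a small subgraph whose local out-neighbourhoods force the cycle routine of Lemma~\ref{lem:1.6SSC-Construction} to close a length-four cycle with no internally disjoint shortcut, so Case~1 of that lemma fires and a size-four perfect set with a single internal cut is extracted; each size-two gadget is a pendant pair $\{\bar v,u_1\}$ whose only return arc is $u_1\bar v$, so Case~3 fires and the set $\{\{\bar vu_1\},\{u_1\bar v\}\}$ is extracted. The backbone arcs, which the local greedy construction never walks, supply a covering that assigns one star to each of the $5b+2$ vertices and forms a strongly connected spanning subgraph; pairing this primal with the singleton-cut dual of Lemma~\ref{lem:SSC-DualBound1} certifies $|OPT(I_b)|=n=5b+2$. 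This mirrors the tightness argument behind Theorem~\ref{thm:1.5DPA-Tightness}, specialized to the mixed size profile that the $8/5$ analysis demands.

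The main obstacle is controlling the nondeterminism of the construction in Lemma~\ref{lem:1.6SSC-Construction} while keeping the global optimum cheap. The routine is free in its starting arc $r\bar v$, in each extension $u\in N^+(\bar v)$, and in its tie-breaking among stars, so the gadgets must be engineered so that \emph{every} admissible run is funnelled into the intended size-four or size-two extraction, and so that none of the shortcut escapes in Case~2 or Case~3 (which would merge pieces into cheaper large cycles and spoil the size profile) is available. Simultaneously the inter-gadget backbone must remain rich enough to give the $n$-cost optimum yet be ``invisible'' to the local construction. Designing gadgets that are locally rigid in this sense while globally cheap is the delicate part; once such gadgets are in hand, the counting above is immediate and the $8/5$ ratio follows.
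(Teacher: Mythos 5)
Your reverse-engineering of the extremal size profile is correct and matches the paper exactly: the per-set ratios from Equation~\eqref{eq:SSC-ConvexCombination} are $8/5$ at sizes $2$ and $4$ and strictly smaller otherwise, and the profile $A_4=b$, $A_2=2b+1$ forces $n=5b+2$, output cost $8b+2$, and both dual bounds equal to $5b+2$ --- precisely the numbers realized by the paper's family $T_k$ ($k$ size-four sets, $2k+1$ size-two sets, Hamiltonian cycle of length $5k+2$). But the proof has a genuine gap: everything after that counting is deferred. You describe ``size-four gadgets,'' ``size-two gadgets,'' and a ``backbone'' only by the properties you need them to have, and you explicitly concede that designing gadgets which are ``locally rigid'' yet ``globally cheap'' is ``the delicate part.'' That delicate part \emph{is} the proof. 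The paper's entire argument consists of an explicit recursive construction ($T_1$ is the union of the cycles $abcdxyz$ and $yxca$; $T_{k+1}$ replaces the vertex $x$ of $T_k$ by a six-vertex copy) together with two induction lemmas: one exhibiting a Hamiltonian cycle of length $5k+2$ through the path $dxy$ (so $|OPT(T_k)|=n$), and one exhibiting a run of the algorithm that contracts the $4$-cycle $y'x'c'a'$ and then two $2$-cycles, reducing $T_{k+1}$ to $T_k$ after spending $8$ arcs. Without a concrete instance family and a verification of both of these facts, no tightness claim has been established.

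A secondary issue makes your task look harder than it is: you require that \emph{every} admissible run of the nondeterministic construction in Lemma~\ref{lem:1.6SSC-Construction} be funnelled into the intended extractions. That is not needed. The approximation guarantee of Theorem~\ref{thm:1.6SSC} holds over all runs, so its tightness only requires exhibiting, for each instance, \emph{one} run whose cost approaches $8/5$ of optimum --- which is exactly how the paper argues (``our algorithm \emph{may} choose $8k+2$ arcs''). Insisting on rigidity against all tie-breakings would likely be impossible on the paper's own family (a lucky run on $T_k$ can do much better, e.g.\ by finding the Hamiltonian cycle) and is a misreading of what tightness of a worst-case ratio means for a nondeterministic algorithm. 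Dropping that requirement, and then actually supplying a concrete construction such as $T_k$ with the two inductive verifications, is what is needed to complete the proof.
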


\section{Conclusion}
\label{sec:Conclusion}

We introduced a novel approach to approximating network design problems with cut-based linear programming relaxations. Our method combines the combinatorial (recursive) structure of the problem with the cut-based structure produced by the corresponding dual linear program. Identifying subgraphs that meet both the recursive and dual structural constraints can produce provably good approximations.

We applied this methodology to a number of standard network design problems. In the case of {\sc Minimum 2-Edge-Connected Spanning Subgraph}, the resulting algorithm is equivalent to a previously proposed 3/2-approximation~\cite{Khuller1994}. For the problem of {\sc Minimum Strongly Connected Spanning Subgraph}, we produce a tight 1.6-approximation. Although this is slightly worse than the 1.5-approximation of Vetta~\cite{Vetta2001}, our algorithm has notably fewer cases than Vetta's algorithm.

We also applied our dual-based approach to a common power assignment network design problem. The resulting algorithm for {\sc Dual Power Assignment} achieves the best approximation ratio known of 1.5 (improving on the previous best known 1.57-approximation of~\cite{Abu2014}). 
We introduced a new problem generalizing {\sc DPA} and {\sc MSCS}, which we call {\sc Star Strong Connectivity}. Our approach gives a tight 1.6-approximation to {\sc SSC}.
Each of our approximation results also proves an upper bound on the integrality gap of the corresponding problem. 

Our dual-based approach can likely be applied to other unweighted connectivity problems with cut-based linear programs. Further application of this method will likely produce new approximation algorithms and improved upper bounds on their integrality gaps.
Another interesting extension of this work would generalize {\sc SSC} to have costs on each star. Even when costs are constrained to be in $\{0,1\}$, {\sc Weighted SSC} is at least as hard to approximate as {\sc Set Cover}. This follows from a very simple reduction that was observed in~\cite{Calinescu2014}. As a consequence, any work on approximating the weighted variant will at best achieve a logarithmic approximation ratio. We believe the methods used in~\cite{Calinescu2003} will generalize to give {\sc Weighted SSC} such a logarithmic approximation.\\

\paragraph{Acknowledgments.} This research was supported in part by a College of Science Undergraduate Summer Research Award at the Illinois Institute of Technology. We thank Gruia C{\u{a}}linescu for his many valuable comments and fruitful discussions, which notably improved the paper.


\bibliographystyle{plain}
\bibliography{all.bib}

\appendix

\section{Proof of Lemma~\ref{lem:SSC-LPcorrectness}}

Consider some instance of {\sc SSC} given by a digraph $G=(V,E)$ and a set of stars $\mathcal{C}$. Let $R_{OPT}\subseteq C$ be the optimal solution to {\sc SSC}. Let $x^*$ be the optimal solution to {\sc SSC Primal LP} when restricted to $x_F\in \mathbb{Z}$. We then need to show that $|R_{OPT}| = \sum_{F\in \mathcal{C}} x^*_F$.

First we show that $|R_{OPT}| \geq \sum_{F\in \mathcal{C}} x^*_F$. Consider the vector $x$ produced by assigning all $F\in R_{OPT}$ value 1 and the rest value 0. Then $|R_{OPT}| = \sum_{F\in \mathcal{C}} x_F$. Our inequality will follow if we show $x$ is a feasible solution to {\sc SSC Primal LP}, since $x^*$ is the minimum feasible solution. From our construction, all $x_F\geq 0$. Further consider any cut $\emptyset\subset S \subset V$. Since $R_{OPT}$ produces a strongly connected spanning subgraph, some $F\in R_{OPT}$ crosses $S$. Since this $x_F=1$, we know $\sum_{F\in\partial \mathcal{C}(S)}x_F\geq 1$. Thus $x$ is feasible.  

Now we prove that $|R_{OPT}| \leq \sum_{F\in \mathcal{C}} x^*_F$. We know that all $x^*_F\in\{0,1\}$ (if a larger $x^*_F$ exists, our objective is reduced by reducing it to $x^*_F=1$ without effecting feasibility). Consider the set of stars $R=\{F | x^*_F=1\}$. Then $|R| = \sum_{F\in \mathcal{C}} x^*_F$. Our inequality will follow if we show $R$ is a feasible solution to {\sc SSC}, since $R_{OPT}$ is the minimum feasible solution. We prove this by contradiction. Let $G'$ be the digraph induced by $R$ (i.e.\ $G' = (V, \bigcup_{F\in R} F)$). Assume $G'$ is not strongly connected. Then there exists $s,t\in V$ such that there is no $s,t$-path in $G'$. Consider the set $V_s\subseteq V$ of all vertices $u$ with a $s,u$-path. Note $t\notin V_s$. Then $V_s$ is a cut with no arcs or stars crossing it. However, this contradicts the fact that $x^*$ is feasible. Thus we can conclude $R$ is feasible. Lemma~\ref{lem:SSC-LPcorrectness} follows.

\section{Tightness of 1.5-Approximation Ratio for {\sc DPA}}
\label{sec:1.5DPA-Tightness}

\begin{figure}[t]
\centering
\def\svgwidth{0.5\textwidth} 
\begingroup%
  \makeatletter%
  \providecommand\color[2][]{%
    \errmessage{(Inkscape) Color is used for the text in Inkscape, but the package 'color.sty' is not loaded}%
    \renewcommand\color[2][]{}%
  }%
  \providecommand\transparent[1]{%
    \errmessage{(Inkscape) Transparency is used (non-zero) for the text in Inkscape, but the package 'transparent.sty' is not loaded}%
    \renewcommand\transparent[1]{}%
  }%
  \providecommand\rotatebox[2]{#2}%
  \ifx\svgwidth\undefined%
    \setlength{\unitlength}{979.40990454bp}%
    \ifx\svgscale\undefined%
      \relax%
    \else%
      \setlength{\unitlength}{\unitlength * \real{\svgscale}}%
    \fi%
  \else%
    \setlength{\unitlength}{\svgwidth}%
  \fi%
  \global\let\svgwidth\undefined%
  \global\let\svgscale\undefined%
  \makeatother%
  \begin{picture}(1,0.45016995)%
    \put(0,0){\includegraphics[width=\unitlength]{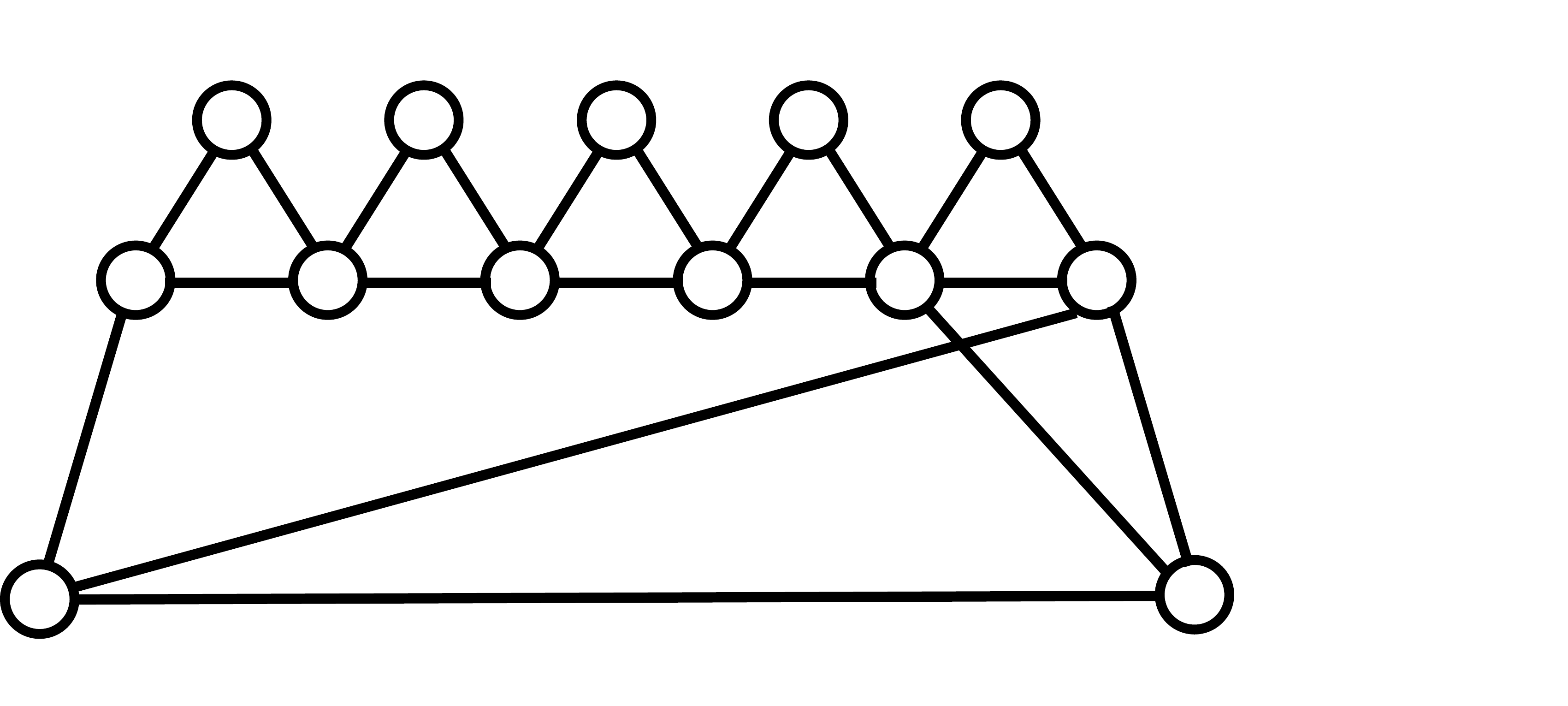}}%
    \put(-0.00058599,0.007203){\color[rgb]{0,0,0}\makebox(0,0)[lb]{\smash{$\bar v$}}}%
    \put(0.73921809,0.01187054){\color[rgb]{0,0,0}\makebox(0,0)[lb]{\smash{$\bar w$}}}%
    \put(0.07922884,0.21117422){\color[rgb]{0,0,0}\makebox(0,0)[lb]{\smash{$u_1$}}}%
    \put(0.18911061,0.21482815){\color[rgb]{0,0,0}\makebox(0,0)[lb]{\smash{$u_2$}}}%
    \put(0.30288175,0.21467202){\color[rgb]{0,0,0}\makebox(0,0)[lb]{\smash{$u_3$}}}%
    \put(0.5136209,0.23030826){\color[rgb]{0,0,0}\makebox(0,0)[lb]{\smash{$u_{k}$}}}%
    \put(0.72295978,0.27301621){\color[rgb]{0,0,0}\makebox(0,0)[lb]{\smash{$u_{k+1}$}}}%
    \put(0.10451158,0.41293451){\color[rgb]{0,0,0}\makebox(0,0)[lb]{\smash{$l_1$}}}%
    \put(0.22995841,0.41293451){\color[rgb]{0,0,0}\makebox(0,0)[lb]{\smash{$l_2$}}}%
    \put(0.35831557,0.41293451){\color[rgb]{0,0,0}\makebox(0,0)[lb]{\smash{$l_3$}}}%
    \put(0.60569483,0.41293451){\color[rgb]{0,0,0}\makebox(0,0)[lb]{\smash{$l_k$}}}%
    \put(0.49464958,0.42588077){\color[rgb]{0,0,0}\makebox(0,0)[lb]{\smash{...}}}%
    \put(0.43257139,0.22576027){\color[rgb]{0,0,0}\makebox(0,0)[lb]{\smash{...}}}%
  \end{picture}%
\endgroup%
\caption{Example instance of $G_k$ used to show our 1.5-approximation ratio for {\sc DPA} is tight.}
\label{fig:DPA-Tightness}
\end{figure}
We prove this by giving a family $G_k$ of instances of {\sc SSC} with a bidirected input digraphs where our algorithm can choose arbitrarily close to $3|OPT(G_k)|/2$ stars. Our family of instances will only have stars of size one. Therefore, we can represent an instance of {\sc SSC} using only the corresponding digraph. Further, since the digraph must be bidirected, we can represent it using an undirected graph.

We define our family $G_k$ as follows: Let $V(G_k)=\{\bar v, \bar w\}\cup\{u_1,u_2,...u_{k+1}\}\cup\{l_1,l_2,...l_k\}$.
Let $E(G_k)$ contains $\bar v u_{k+1}$, $\bar w u_k$, and all edges in the cycle $\bar v,u_1,u_2,...u_{k+1}\bar w$ and the cycle $\bar v,u_1,l_1,u_2,l_2,...l_k,u_{k+1},\bar w$. An example instance of $G_k$ is depicted in Figure~\ref{fig:DPA-Tightness}

Suppose Algorithm~\ref{alg:1.5DPA} is run on $G_k$. 
When our cycle construction is run, it could build the path $u_{k+1},\bar w,u_k,...u_2,u_1,\bar v$ before terminating.
Then it would choose the perfect set corresponding to the cycle $\bar v, u_{k+1},\bar w,u_k, ... u_2,u_1$. This set has disjoint internal cuts $\{\bar v\}$ and $\{\bar w\}$. After this is contracted into a vertex $s$, the algorithm will have to choose the perfect set of size two contracting $l_i$ into $s$ for each $1\leq i\leq k$. Each of these sets have disjoint internal cuts $\{l_i\}$ and $V\setminus \{l_i\}$. Therefore our algorithm could choose $3k+3$ stars.

The optimal solution to $G_k$ will choose the perfect set corresponding to the Hamiltonian cycle $\bar v,u_1,l_1,u_2,l_2,...l_k,u_{k+1},\bar w$. This solution has objective $2k+3$. Then the approximation ratio achieved on $G_k$ could be as large as $\frac{3k+3}{2k+3}$. As $k$ approaches infinity, the ratio achieved on $G_k$ approaches $3/2$.


\section{Tightness of 1.6-Approximation Ratio for {\sc SSC}}
\label{sec:1.6SSC-Tightness}

\begin{figure}[t]
\centering
\def\svgwidth{0.80\textwidth}
\begingroup%
  \makeatletter%
  \providecommand\color[2][]{%
    \errmessage{(Inkscape) Color is used for the text in Inkscape, but the package 'color.sty' is not loaded}%
    \renewcommand\color[2][]{}%
  }%
  \providecommand\transparent[1]{%
    \errmessage{(Inkscape) Transparency is used (non-zero) for the text in Inkscape, but the package 'transparent.sty' is not loaded}%
    \renewcommand\transparent[1]{}%
  }%
  \providecommand\rotatebox[2]{#2}%
  \ifx\svgwidth\undefined%
    \setlength{\unitlength}{1036.78910778bp}%
    \ifx\svgscale\undefined%
      \relax%
    \else%
      \setlength{\unitlength}{\unitlength * \real{\svgscale}}%
    \fi%
  \else%
    \setlength{\unitlength}{\svgwidth}%
  \fi%
  \global\let\svgwidth\undefined%
  \global\let\svgscale\undefined%
  \makeatother%
  \begin{picture}(1,0.50915864)%
    \put(0,0){\includegraphics[width=\unitlength]{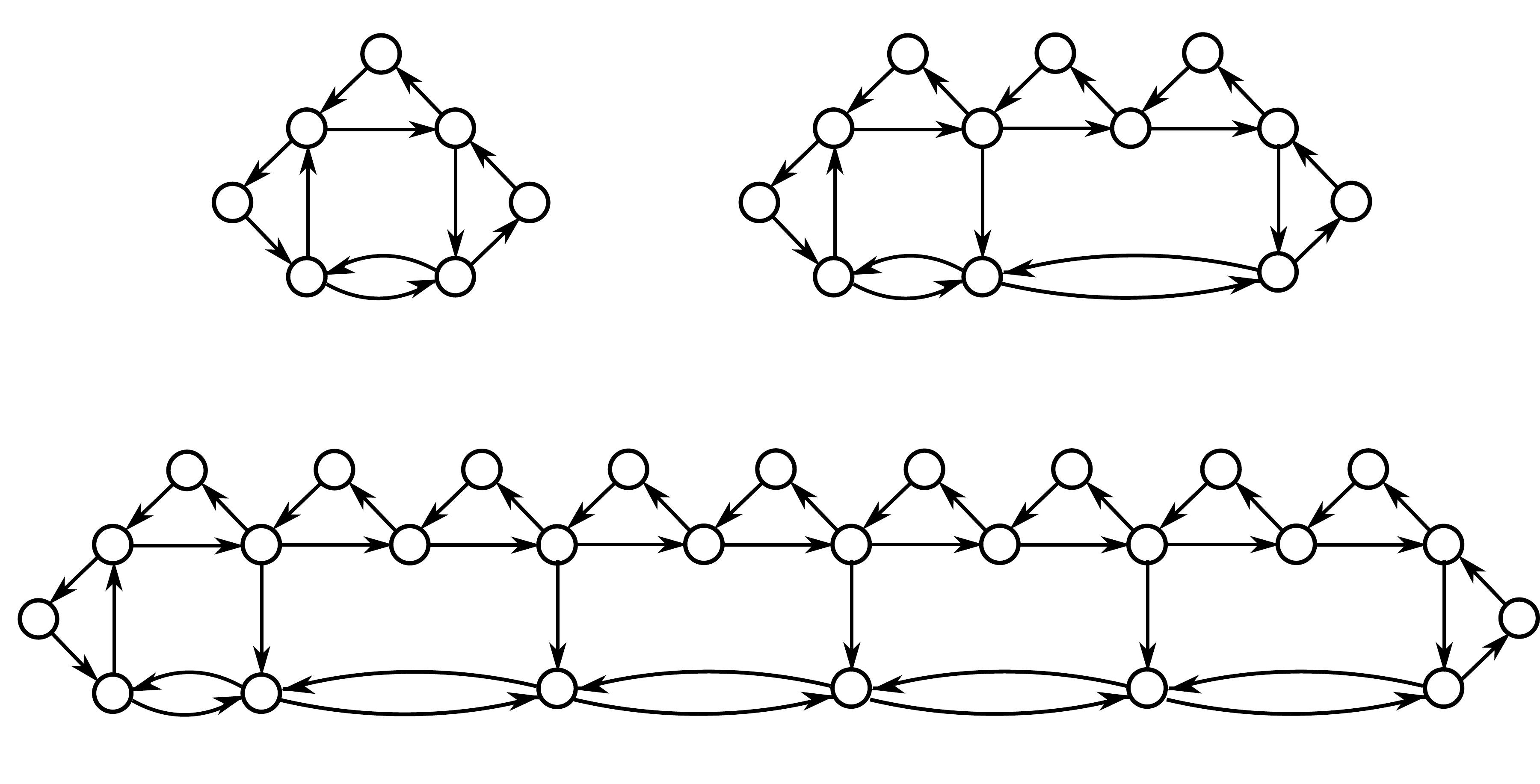}}%
    \put(0.2992524,0.44084508){\color[rgb]{0,0,0}\makebox(0,0)[lb]{\smash{$a$}}}%
    \put(0.22319341,0.48570879){\color[rgb]{0,0,0}\makebox(0,0)[lb]{\smash{$b$}}}%
    \put(0.1702828,0.43830979){\color[rgb]{0,0,0}\makebox(0,0)[lb]{\smash{$c$}}}%
    \put(0.12343486,0.38980837){\color[rgb]{0,0,0}\makebox(0,0)[lb]{\smash{$d$}}}%
    \put(0.16422013,0.30658432){\color[rgb]{0,0,0}\makebox(0,0)[lb]{\smash{$x$}}}%
    \put(0.30774015,0.30823784){\color[rgb]{0,0,0}\makebox(0,0)[lb]{\smash{$y$}}}%
    \put(0.35778477,0.38209224){\color[rgb]{0,0,0}\makebox(0,0)[lb]{\smash{$z$}}}%
    \put(0.62522333,0.44240399){\color[rgb]{0,0,0}\makebox(0,0)[lb]{\smash{$a'$}}}%
    \put(0.5653134,0.48570884){\color[rgb]{0,0,0}\makebox(0,0)[lb]{\smash{$b'$}}}%
    \put(0.51240278,0.43830984){\color[rgb]{0,0,0}\makebox(0,0)[lb]{\smash{$c'$}}}%
    \put(0.46555487,0.38980842){\color[rgb]{0,0,0}\makebox(0,0)[lb]{\smash{$d'$}}}%
    \put(0.50171045,0.30658432){\color[rgb]{0,0,0}\makebox(0,0)[lb]{\smash{$x'$}}}%
    \put(0.60625811,0.2942078){\color[rgb]{0,0,0}\makebox(0,0)[lb]{\smash{$y'$}}}%
    \put(0.15710649,0.17193191){\color[rgb]{0,0,0}\makebox(0,0)[lb]{\smash{$a'$}}}%
    \put(0.09719656,0.21523671){\color[rgb]{0,0,0}\makebox(0,0)[lb]{\smash{$b'$}}}%
    \put(0.04428592,0.16783771){\color[rgb]{0,0,0}\makebox(0,0)[lb]{\smash{$c'$}}}%
    \put(-0.002562,0.11933629){\color[rgb]{0,0,0}\makebox(0,0)[lb]{\smash{$d'$}}}%
    \put(0.03359359,0.03611219){\color[rgb]{0,0,0}\makebox(0,0)[lb]{\smash{$x'$}}}%
    \put(0.13814126,0.02373567){\color[rgb]{0,0,0}\makebox(0,0)[lb]{\smash{$y'$}}}%
    \put(0.22780519,0.27116425){\color[rgb]{0,0,0}\makebox(0,0)[lb]{\smash{(a)}}}%
    \put(0.67213688,0.27116425){\color[rgb]{0,0,0}\makebox(0,0)[lb]{\smash{(b)}}}%
    \put(0.47878708,0.00406905){\color[rgb]{0,0,0}\makebox(0,0)[lb]{\smash{(c)}}}%
  \end{picture}%
\endgroup%
\caption{Example instances of $T_k$ used to show our 1.6-approximation ratio for {\sc MSCS} and {\sc SSC} is tight. (a) $T_1$ (b) $T_2$ (c) $T_5$}
\label{fig:SSC-Tight}
\end{figure}

We show our ratio is tight by giving a simple family of digraphs where our algorithm may choose arbitrarily close to a 1.6-approximation. We use an example where all $|F|=1$ (i.e.\ when {\sc SSC} is equivalent to {\sc MSCS}). This allows us to describe any instance uniquely by giving its digraph. Figure~\ref{fig:SSC-Tight} gives examples of our family of digraphs, $T_k$. Formally, $T_k$ is recursively defined as follows: 

First, $T_1$ is a digraph with vertices $\{a, b, c, d, x, y, z\}$ and arcs of the cycles $abcdxyz$ and $yxca$. Each $T_k$ will have four specific vertices denoted by $c$, $d$, $x$ and $y$. To construct $T_{k+1}$ from $T_k$, we replace $x$ with the vertices $\{a',b',c',d',x',y'\}$. The arc from $x$ to $c$ is replaced with the arc from $a'$ to $c$. The arc from $d$ to $x$ is replaced with the arc from $d$ to $a'$. Similarly the two arcs between $x$ and $y$ are replaced with two arcs between $y'$ and $y$. Further, we connect these new vertices with the arcs of the paths $a'b'c'd'x'y'$ and $y'x'c'a'$. The vertices $c'$, $d'$, $x'$ and $y'$ from our expansion of $T_k$ are $c$, $d$, $x$ and $y$ for $T_{k+1}$ respectively.

We prove the following two lemmas about $T_k$ to establish the approximation ratio of $T_k$ approaches 1.6 as $k$ grows.

\begin{lemma} \label{lem:SSC-TightHam}
Every $T_k$ has a Hamiltonian cycle containing the path $dxy$ and of length $5k+2$.
\end{lemma}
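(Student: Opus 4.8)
The plan is to prove the statement by induction on $k$, carrying along not just the existence of a Hamiltonian cycle but the stronger structural invariant that the cycle traverses the arcs $d\to x$ and $x\to y$ consecutively (i.e.\ contains the sub-path $dxy$). This extra hypothesis is exactly what is needed to perform the inductive splice.

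For the base case $T_1$, I would simply exhibit the cycle $abcdxyz$: it visits all seven vertices, so it is Hamiltonian of length $7=5\cdot 1+2$, and it contains the sub-path $dxy$ directly since $d\to x$ and $x\to y$ are consecutive arcs of that cycle.

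For the inductive step, I would assume $T_k$ has a Hamiltonian cycle $H_k$ of length $5k+2$ containing the path $dxy$, and build the corresponding cycle for $T_{k+1}$. The key observation is how the construction modifies the neighborhood of the distinguished vertex $x$: the construction deletes $x$, introduces the six vertices $\{a',b',c',d',x',y'\}$ joined internally by the path $a'\to b'\to c'\to d'\to x'\to y'$, and rewrites the two arcs incident to $x$ that $H_k$ uses, namely $d\to x$ becomes $d\to a'$ and $x\to y$ becomes $y'\to y$. I would therefore obtain $H_{k+1}$ from $H_k$ by replacing the sub-path $d\to x\to y$ with the detour $d\to a'\to b'\to c'\to d'\to x'\to y'\to y$ and leaving the rest of $H_k$ unchanged. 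Every arc of this detour exists in $T_{k+1}$ by the construction, the detour visits each of the six new vertices exactly once while eliminating the single visit to the deleted $x$, so $H_{k+1}$ is again a Hamiltonian cycle and its length grows by $6-1=5$, giving $5(k+1)+2$.

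Finally I would close the induction by checking the invariant for $T_{k+1}$. Since the vertices $c',d',x',y'$ are relabeled as the distinguished $c,d,x,y$ of $T_{k+1}$, the path $dxy$ required for $T_{k+1}$ is precisely the sub-path $d'\to x'\to y'$, which sits inside the detour just inserted. I expect no genuine obstacle: this is a routine structural induction, and the only place demanding care is the bookkeeping of the relabeling and the verification that the rerouting arcs $d\to a'$ and $y'\to y$ are exactly the images of the deleted arcs $d\to x$ and $x\to y$, so that the splice is well defined and the resulting walk is indeed a single closed cycle.
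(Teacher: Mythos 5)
Your proposal is correct and matches the paper's own proof: both argue by induction with the same strengthened invariant (the cycle contains the sub-path $dxy$), use the cycle $abcdxyz$ as the base case, and perform the identical splice replacing $d\to x\to y$ by the detour through $a',b',c',d',x',y'$ via the rewritten arcs $d\to a'$ and $y'\to y$, gaining exactly five arcs per step. The only difference is presentational: the paper phrases the splice as concatenating the translated arcs of $C$ (a path from $y'$ to $a'$) with the path $a'b'c'd'x'y'$, which is the same construction you describe.
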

\begin{proof}
We prove this by induction. By the definition of $T_1$, it contains the Hamiltonian cycle $abcdxyz$. Then for our inductive step, we assume there is such a Hamiltonian cycle $C$ in the digraph $T_k$.
We consider the path made by the arcs of $C$ in $T_{k+1}$ (note the arc $xy$ becomes the arc $y'y$ and $dx$ becomes $da'$). Then the arcs of $C$ form a path starting at $y'$, going through all vertices common with $T_k$ and ending at $a'$. Concatenating this with the path $a'b'c'd'x'y'$ will yield a Hamiltonian cycle in $T_{k+1}$. Note this cycle contains the path $d'x'y'$. We added five new arcs to this cycle, giving a total size of $5k+2+5=5(k+1)+2$, which completes our inductive proof.
 \end{proof}

\begin{lemma} \label{lem:SSC-TightA(I)}
Our algorithm may choose $8k+2$ arcs on input $T_k$.
\end{lemma}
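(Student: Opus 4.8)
The plan is to prove the statement by induction on $k$, exhibiting an explicit worst-case run of Algorithm~\ref{alg:1.6SSC} on $T_k$ that selects exactly $8k+2$ arcs. Combined with Lemma~\ref{lem:SSC-TightHam} this is what gives tightness: a Hamiltonian cycle of length $5k+2$ is an optimal strongly connected spanning subgraph (strong connectivity forces at least $|V|=5k+2$ arcs), so $|OPT(T_k)|=5k+2$ and the achievable ratio $\tfrac{8k+2}{5k+2}$ tends to $8/5$ as $k\to\infty$.

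For the base case I would run the cycle construction of Lemma~\ref{lem:1.6SSC-Construction} on $T_1$ starting from the arc $c\to a$. Greedily extending along $c,a,y,x$ halts at $x$, whose out-neighbors $\{y,c\}$ both already lie on the path, so the construction returns the $4$-cycle $c\to a\to y\to x\to c$. Since $|C|=4$ this is a Case~1 perfect set with internal cut $\{x\}$, and no enlargement test applies. Contracting $\{a,c,x,y\}$ leaves a supervertex whose only remaining neighbors are $b$, $d$, $z$, each joined to it by a $2$-cycle, so three successive Case~3 contractions finish the graph for a total of $4+2+2+2=10=8\cdot 1+2$ arcs.

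For the inductive step I would show that on $T_{k+1}$ the algorithm can first process the six new gadget vertices $\{a',b',c',d',x',y'\}$, contracting them into a single supervertex using perfect sets of total size $8$, after which the contracted graph is exactly $T_k$ with this supervertex inheriting the arcs of the expanded vertex $x$. This last identification is a routine check against the recursive definition: the external links $d\to a'$, $a'\to c$, $y\to y'$, $y'\to y$ collapse precisely to $d\to x$, $x\to c$, $y\to x$, $x\to y$, which is the local structure of $x$ in $T_k$. Applying the inductive hypothesis then yields $|A(T_{k+1})|=8+|A(T_k)|=8(k+1)+2$. The natural combinatorial $8$-arc decomposition of the gadget is to contract $\{a',b',c'\}$, then the triangle formed by the supervertex with $d',x'$, then the $2$-cycle formed by the supervertex with $y'$ (sizes $3+3+2=8$, reducing six vertices to one).

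The hard part will be certifying that these gadget contractions are genuinely producible by the construction of Lemma~\ref{lem:1.6SSC-Construction}, rather than being overridden by its enlargement rules. A triangle such as $\{a',b',c'\}$ is \emph{enlargeable} in the full $T_{k+1}$, since a nontrivial internally-disjoint detour runs through the outer loop; were it selected in that state, Case~2 would replace it by a long cycle and spoil the count. I expect the resolution to lie in choosing the order of contractions and the starting arcs of the cycle construction so that, at the moment each small perfect set is selected, the relevant enlargement paths have already been destroyed by earlier contractions, forcing the construction into Case~3 or into the non-enlargeable subcases~2.1--2.4. Making this precise, and verifying each selected perfect set against every enlargement test while confirming that the contractions really do reproduce $T_k$, is the principal technical burden; the arithmetic ($8$ arcs per level, base value $2$) is then immediate.
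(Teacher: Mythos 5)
Your overall structure (induction on $k$, a worst-case run on $T_1$ totalling $10$ arcs, then $8$ arcs per new gadget) is the same as the paper's, and your base case coincides exactly with the paper's. The gap is in your inductive step, and it is precisely the issue you flagged and then deferred: your $3+3+2$ decomposition of the gadget cannot be realized by Algorithm~\ref{alg:1.6SSC} at all. The triangle $\{a',b',c'\}$ (arcs $a'\to b'$, $b'\to c'$, $c'\to a'$), taken in the intact graph $T_{k+1}$, has only one internal cut, namely $\{b'\}$: the cut $\{a'\}$ is not internal because of the star containing $a'\to c$, and the cut $\{c'\}$ is not internal because of $c'\to d'$. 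A size-$3$ perfect set without two star-disjoint internal cuts is not a set ``as shown in Lemma~\ref{lem:1.6SSC-Construction}'', so no run of the algorithm may contract it; equivalently, in Case~2 of that lemma the nontrivial path $c'\to d'\to x'\to y'\to y\to z\to\cdots\to d\to a'$, internally disjoint from the triangle, triggers the enlargement rule, and the construction returns a longer cycle instead. Your hoped-for fix --- ordering contractions so that enlargement paths are already destroyed --- cannot apply here, because $\{a',b',c'\}$ is the \emph{first} contraction in your decomposition and the offending detour runs through $d',x',y'$, which your plan contracts only later. The same objection then hits your second triangle (the supervertex with $d',x'$), whose only internal cut is $\{d'\}$ and which admits the detour $x'\to y'\to y\to\cdots\to d\to s$.

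The paper's inductive step avoids exactly this trap by using a $4+2+2$ decomposition rather than $3+3+2$: it first contracts the $4$-cycle on $\{y',x',c',a'\}$, which falls under Case~1 and therefore needs only the single internal cut $\{x'\}$ (long detours are harmless for sets of size at least four), and then contracts the two $2$-cycles joining the resulting supervertex $w$ to $d'$ and to $b'$, each of which genuinely has two star-disjoint internal cuts (e.g.\ $\{d'\}$ and $V\setminus\{d'\}$, since after the first contraction $d'$ is adjacent only to $w$). In other words, the size-$\geq 4$ set is spent exactly where the detours live, and the small sets are reserved for pendant-like vertices where no detour exists. To repair your proof you would need to replace your two triangles with this (or an equivalent) choice of perfect sets; as written, your inductive step selects perfect sets the algorithm can never select, so the claimed run of cost $8(k+1)+2$ does not exist under your decomposition.
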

\begin{proof}
We prove this by induction. For $T_1$, our cycle construction could build the path $cayx$. Then the cycle $cayx$ with internal cut $\{x\}$ may be used to create our perfect set. Let $w$ be the resulting supervertex after contracting these four vertices. The next three iterations of our algorithm will contract the cycles $wd$, $wb$ and $wz$. Total this choose 10 arcs, confirming our base case.

Now we assume our algorithm will produce a solution to $T_k$ using $8k+2$ arcs. Given an instance of $T_{k+1}$, consider the vertices added in our recursive construction: $\{a',b',c',d',x',y'\}$. As in our base case, the algorithm may contract the cycle $y'x'c'a'$ into a supervertex $w$. Then it can contract the cycles $wd'$ and $wb'$. After these contractions, the six vertices that replaced $x$ in $T_k$ have been combined to a single vertex. Then it follows that after our algorithm selects these 8 arcs and contracts, $T_{k+1}$ becomes an instance of $T_k$.
By our inductive assumption, this process could choose $8+(8k+2)=8(k+1)+2$ arcs.
 \end{proof}

From Lemma~\ref{lem:SSC-TightHam}, we know that the optimal solution to $T_k$ costs $5k+2$. Combining this result with Lemma~\ref{lem:SSC-TightA(I)}, we find $T_k$ could have an approximation ratio of $\frac{8k+2}{5k+2}$, which approaches $8/5=1.6$.

\end{document}